\newcommand{\problem}[6][ ]{
\begin{prob}
Compute a function $#2$ with the following properties:\\
\begin{tabular}{@{}p{0.125\linewidth}@{}p{0.875\linewidth}@{}}
Input & #3.\\
Output & A labeling coset $#2#4=\Lambda\leq\lab(V)$ such that:\\
(CL1) & $#2#4=\phi#2#5$ for all $\phi\in\iso(V;V')$.\\
(CL2) & $#2#4=#6\pi$#1for some (and thus for all) $\pi\in\Lambda$.
\end{tabular}
\end{prob}
}
\newcommand{\transitive}[2]{
Define $A^\ord:=A^\rho$ and define $\Delta^\ord:=(\Delta\rho)^\rho$.\\
Partition $A^\ord=A^\ord_1\cupdot A^\ord_2$ where
$A^\ord_1=\{1,\ldots,\lfloor\frac{|A^\ord|}{2}\rfloor\}$.\\
Compute the subgroup $\Psi^\ord:=\stab_{\Delta^\ord}(A^\ord_1)$.\\
\emph{(To compute the group $\Psi^\ord$, we use
a membership test.)}\\
Decompose $\Delta^\ord=\bigcup_{i\in[s]}\delta^\ord_i\Psi^\ord$ as union of left
cosets of the group $\Psi^\ord$.\\
Compute $\Delta_i\rho_i:=#1(#2,A,\rho\delta^\ord_i\Psi^\ord)$ for each
$i\in[s]$ recursively.\\
Rename the indices in $[s]$ such that:\\
$(#2,\Delta\rho)^{\rho_1}=\ldots=(#2,\Delta\rho)^{\rho_r} \prec(#2,\Delta\rho)^{\rho_{r+1}}\preceq\ldots\preceq
(#2,\Delta\rho)^{\rho_s}$.\\
Compute and return $\Lambda:=
\langle\Delta_1\rho_1,\ldots,\Delta_r\rho_r\rangle$.\\
\emph{(This is the smallest coset containing
$\Delta_1\rho_1\cup\ldots\cup\Delta_r\rho_r$.)} }
\newcommand{\minimal}{
\emph{(The minimality is w.r.t. the order ``$\prec$'' that is defined
in
\cref{sec:comb:objs:and:lab:cos}.)}}
\theoremstyle{plain}
\newtheorem{theo}{Theorem}
\newtheorem*{theo*}{Theorem}
\crefname{theo}{Theorem}{Theorems}
\newtheorem{lem}[theo]{Lemma}
\crefname{lem}{Lemma}{Lemmata}
\newtheorem{cor}[theo]{Corollary}
\crefname{cor}{Corollary}{Corollarys}
\theoremstyle{definition}
\newtheorem{defn}[theo]{Definition}
\crefname{exa}{Example}{Examples}
\newtheorem{prob}[theo]{Problem}
\crefname{prob}{Problem}{Problems}
\theoremstyle{remark}
\crefname{section}{Section}{Sections}
\crefname{appendix}{Appendix}{Appendices}
\crefname{figure}{Figure}{Figures}
\newenvironment{cl1}{\vspace{0.1cm}\noindent(\textit{CL1.})}{}
\newenvironment{cl2}{\vspace{0.1cm}\noindent(\textit{CL2.})}{}
\newenvironment{A}{\vspace{0.1cm}\noindent(\textit{A.})}{}
\newenvironment{runtime}{\vspace{0.1cm}\noindent(\textit{Running time.})}{}
\newcommand{\case}[1]{\item[\itshape\mdseries If {#1}:]}
\newenvironment{cs}{\begin{description}}{\end{description}}
\newcommand{\algorithmDAN}[1]{\vspace{0.1cm}\noindent
\underline{An algorithm for $#1$:}}
\definecolor{myBlue}{rgb}{0.0, 0.5, 1.0}
\definecolor{myRed}{rgb}{1.0, 0.5, 0}
\definecolor{myGreen}{rgb}{0.0, 0.5, 0.0}
\definecolor{myYellow}{rgb}{1.0, 1.0, 0.0}
\renewcommand{\phi}{\varphi}
\renewcommand{\epsilon}{\varepsilon}
\newcommand{\NN}{{\mathbb N}}
\newcommand{\ZZ}{{\mathbb Z}}
\newcommand{\CC}{{\mathcal C}}
\newcommand{\CO}{{\mathcal O}}
\newcommand{\CX}{{\mathcal X}}
\newcommand{\CY}{{\mathcal Y}}
\let\downarrowOld\downarrow
\renewcommand{\downarrow}{\mathord{\downarrowOld}}
\let\uparrowOld\uparrow
\renewcommand{\uparrow}{\mathord{\uparrowOld}}
\renewcommand{\tilde}{\widetilde}
\newcommand{\sym}{\operatorname{Sym}}
\newcommand{\aut}{\operatorname{Aut}}
\newcommand{\iso}{\operatorname{Iso}}
\newcommand{\stab}{\operatorname{Stab}}
\newcommand{\pstab}{\operatorname{PointStab}}
\newcommand{\norm}{\operatorname{Norm}}
\newcommand{\id}{\operatorname{id}}
\newcommand{\lab}{{\operatorname{Label}}}
\newcommand{\tranCl}{\operatorname{TClosure}}
\newcommand{\obj}{\operatorname{Objects}}
\newcommand{\set}{{\operatorname{Set}}}
\newcommand{\ord}{{\operatorname{Can}}}
\newcommand{\can}{\operatorname{CL}}
\newcommand{\canSet}{\operatorname{CL}_{\operatorname{Set}}}
\newcommand{\canObj}{\operatorname{CL}_{\operatorname{Object}}}
\newcommand{\canMatch}{\operatorname{CL}_{\operatorname{Match}}}
\newcommand{\canGroup}{\operatorname{CL}_{\operatorname{Group}}}
\newcommand{\canShift}{\operatorname{CL}_{\operatorname{Shift}}}
\begin{document}

\title{Normalizers and permutational
isomorphisms in
simply-exponential time}

\author{
Daniel Wiebking\\
RWTH Aachen University\\
\texttt{wiebking@informatik.rwth-aachen.de}}

\date{}

\maketitle

\KOMAoptions{abstract=true}

\begin{abstract}
We show that normalizers and permutational isomorphisms
of permutation groups given by generating sets can be computed
in time simply exponential in the degree of the groups.
The result is obtained by exploiting canonical forms
for permutation groups (up to permutational isomorphism).
\end{abstract}

\section{Introduction}

Computational group theory deals with practical computations in groups.
For example, today's computer algebra systems (such as Magma and
GAP) can handle permutation groups of degree in the hundreds of thousands.
The key for handling such large permutation groups lies in a compact implicit
representations of the groups.
In fact, every group $G\leq \sym(V)$
can be represented by a generating set of size polynomial in $|V|$, whereas
the order $|G|$ of a group can be exponential in $|V|$.
It is a priori unclear if fundamental algorithmic tasks as determining the order
of a group or testing membership in a group can be solved efficiently when
a group is given by such a compact representation.
In Sim's pioneering work, he gives
solutions for these tasks, which
have later been shown to run in polynomial time
\cite{SIMS1970169,Sims:1971:CPG:800204.806264,DBLP:conf/focs/FurstHL80}.
His algorithms are not only of theoretical interest, but randomized
versions as in \cite{DBLP:conf/issac/BabaiCFS91} turned out to be fast in
practice and build the core of present computer algebra systems (such as Magma and
GAP).

On the other side, there are central group theoretic tasks for which efficient
solutions are unavailable so far.
Important examples are the problems in the so-called Luks equivalent class,
which are all shown to be polynomial-time equivalent
\cite{DBLP:conf/dimacs/Luks91}.
The Luks class includes the setwise-stabilizer problem,
the group-intersection problem, the centralizer problem
and the string-isomorphism problem.
The class plays an important role even outside computational group theory.
In fact, Babai's recent breakthrough in graph isomorphism
is obtained by showing that the string-isomorphism
problem (and thus the entire Luks class) can be solved in quasipolynomial time \cite{DBLP:conf/stoc/Babai16}.

However, there is one important notoriously difficult problem outside the Luks
class that does not have a quasipolynomial-time solution yet.
This is the computation of the normalizer
$\norm(G)=\{h\in\sym(V)\mid h^{-1}Gh=G\}$ of a group $G\leq\sym(V)$.
This problem is also related to various isomorphism problems.
For example, the natural isomorphism problem for permutation groups, which asks
for a permutational isomorphism between two given groups,
reduces to the normalizer problem in polynomial time
\cite{DBLP:conf/dimacs/Luks91}.
Similarly, the isomorphism problem for linear codes, also known as code
equivalence, is polynomial time reducible to permutational isomorphism and thus
to the normalizer problem \cite{DBLP:conf/soda/BabaiCGQ11}.

Since the normalizer computation in general is a
tough challenge, researchers considered normalizers
$\norm(G)\cap H$ for restricted groups $H\leq\sym(V)$.
The first polynomial-time results were obtained for normalizers in
nilpotent groups, solvable groups, and $p$-groups
\cite{DBLP:conf/stoc/KantorL90,DBLP:conf/focs/Luks92,DBLP:conf/issac/LuksRW94}.
One decade later a result was obtained for normalizers $\norm(G)\cap H$ where both
groups $G$ and $H$ are restricted
\cite{DBLP:conf/issac/LuksM02}.
Another decade passed when these results were generalized to
normalizers where $H$ has restricted composition factors \cite{DBLP:journals/dmtcs/LuksM11}.

However, for normalizers without restrictions to the groups,
the best time bound (obtained via permutational isomorphism) is
$2^{\CO(|V|)}|G|^{\CO(1)}$ \cite{DBLP:conf/icalp/BabaiCQ12}.
Measured in terms of the degree of the group,
this is still no improvement over the brute force running time, since the order
$|G|$ of the group can be as large as the factorial of $|V|$.
It is a stated open problem whether the running time can be improved to $2^{\CO(|V|)}$
\cite{DBLP:conf/soda/BabaiCGQ11},\cite{codenotti2011testing}.

In this paper, we resolve this open problem and prove the following theorem.

\begin{theo*}

For groups $G\leq\sym(V)$ and $G'\leq\sym(V')$ given by generating sets, one can
compute the following tasks in time $2^{\CO(|V|)}$.
\begin{enumerate}
    \item Deciding permutational isomorphism, i.e., deciding whether
    there is a bijection $\phi:V\to V'$ such that $\phi^{-1}G\phi=G'$.
  \item Computing the normalizer $\norm(G)=\{h\in \sym(V)\mid h^{-1}Gh=G\}$.
  \item Computing a canonical labeling for $G$ (up to
  permutational isomorphisms).
\end{enumerate}

\end{theo*}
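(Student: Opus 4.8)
The plan is to obtain parts~(1) and~(2) from part~(3) by polynomial-time reductions, so the real work is the canonical labeling in part~(3). Suppose we have a procedure that, on input $G\leq\sym(V)$, returns a labeling coset $\can(G)=\Delta\pi\leq\lab(V)$ satisfying the canonicity property (CL1) and the definedness property (CL2), taking as the combinatorial object ``the group $G$ acting on $V$'', whose isomorphisms onto ``the group $G'$ acting on $V'$'' are exactly the bijections $\phi\colon V\to V'$ with $\phi^{-1}G\phi=G'$. I would then read off the normalizer as $\norm(G)=\Delta$. Indeed, by (CL2) all elements of $\can(G)$ induce the same group on $\{1,\dots,|V|\}$, so $(\delta\pi)^{-1}G(\delta\pi)=\pi^{-1}G\pi$, i.e., $\delta^{-1}G\delta=G$, for every $\delta\in\Delta$; hence $\Delta\leq\norm(G)$. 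Conversely, (CL1) with $V'=V$ and $G'=G$ — so that the isomorphism set is $\norm(G)$ itself — gives $h\,\can(G)=\can(G)$, i.e., $h\Delta\pi=\Delta\pi$, for every $h\in\norm(G)$, forcing $h\in\Delta$. For part~(1) I would compute $\can(G)=\Delta\pi$ and $\can(G')=\Delta'\pi'$ and test the single equality $\pi^{-1}G\pi=(\pi')^{-1}G'\pi'$ of subgroups of $\sym(|V|)$ by membership tests: by (CL1) and (CL2) it holds precisely when $G$ and $G'$ are permutationally isomorphic, and then $\pi(\pi')^{-1}$ is a witnessing bijection. Every step here is polynomial in $|V|$, so the theorem reduces to computing $\can(G)$ in time $2^{\CO(|V|)}$.

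For $\can(G)$ itself, the approach is to make it one instance of a general theory of canonical labelings for \emph{combinatorial objects}. I would encode a permutation group $G\leq\sym(V)$ as an object over a ground set of size $\CO(|V|)$ so that isomorphism of encoded objects is permutational isomorphism of the groups, reducing $\canGroup$ to $\canObj$. Since the recursion that solves $\canObj$ meets, at its branching steps, auxiliary permutation groups — the ``string-isomorphism'' groups of the recursion — whose canonical labeling is in turn needed, $\canObj$ and $\canGroup$ (together with a few auxiliary problems: canonizing sets, matchings, and coset-valued data) are handled by one mutual recursion. Its termination rests on a progress parameter, essentially the size of the largest orbit together with the size of the ground set, that strictly decreases along the recursion.

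The recursion is Luks-style divide-and-conquer. An \emph{intransitive} instance is split along the orbits, the pieces are canonized recursively, and they are glued after being sorted by the order ``$\prec$''. An \emph{imprimitive transitive} instance is reduced through a block system. A \emph{primitive} instance is treated using the structure theory of primitive permutation groups: outside a controlled list of large families (affine type, product action, and so on) a primitive group of degree $m$ has order $m^{\CO(\log m)}$ and can be dealt with directly, while the exceptional families are canonized through their defining actions. Underneath all of this lies the basic transitive-case subroutine, to be developed in the body: given a transitive $\Delta$ on a window $A$, fix an ordering $\rho$ of $A$, split $A=A_1\cupdot A_2$ into halves, compute $\Psi=\stab_\Delta(A_1)$ by a membership test, decompose $\Delta=\bigcup_i\delta_i\Psi$ into left cosets, recurse on each coset $\rho\delta_i\Psi$, and aggregate \emph{canonically} — sort the branches by their induced objects (with respect to $\prec$), keep the prefix of $\prec$-minimal ones, and return the smallest labeling coset containing their union. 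Since $\Psi$ stabilizes $A_1$ setwise, the largest orbit at least halves, so the recursion has depth $\CO(\log|V|)$.

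The hard part will be the running-time bound. A single branching node can spawn up to $2^{\CO(|A|)}$ children — attained when $\Delta$ acts on $A$ as the full symmetric group — so any estimate obtained by multiplying branching factors over all levels is hopeless. The saving observation is that stabilizing half of $A$ halves the window, so along a root-to-leaf path the windows have sizes $|A|,|A|/2,|A|/4,\dots$ and the number of leaves is $2^{(2+o(1))|A|}=2^{\CO(|V|)}$; with polynomial work per node, the total running time is $2^{\CO(|V|)}$. Making this rigorous comes down to two checks. First, that \emph{every} recursive call strictly shrinks the progress parameter — in particular that encoding a group as an object never inflates the ground set beyond $\CO(|V|)$ and that the orbit and block reductions never regenerate a full-size window. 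Second, canonicity (CL1): since the algorithm makes arbitrary choices (the ordering $\rho$, the coset representatives $\delta_i$), one proves by induction on the recursion that the output is independent of them — precisely the purpose of the ``$\prec$-minimal branches, then smallest enclosing coset'' aggregation.
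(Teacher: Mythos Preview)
Your reductions from parts~(1) and~(2) to part~(3) are correct and match the paper, and the transitive-case halving subroutine you describe is indeed the one the paper uses. But the central step of your plan --- ``encode a permutation group $G\leq\sym(V)$ as an object over a ground set of size $\CO(|V|)$ so that isomorphism of encoded objects is permutational isomorphism, reducing $\canGroup$ to $\canObj$'' --- is precisely the difficulty, and you have not said how to do it. The existing $\canObj$ (\cref{theo:canObj}) runs in $2^{\CO(k)}n^{\CO(1)}$ where $n$ is the input size with group cosets given \emph{explicitly}; any honest encoding of $G$ as a hereditarily finite set has $n\geq|G|$, which brings you right back to the $2^{\CO(|V|)}|G|^{\CO(1)}$ bound you are trying to improve. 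Encoding $G$ by a generating set does not help either, because isomorphism of generating sets is not permutational isomorphism of the groups they generate. Your subsequent sketch of a Luks--Babai style recursion (block systems, primitive groups, O'Nan--Scott) concerns the structure of the labeling group $\Delta$, not the object $G$, and does not close this gap.

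The paper's route is different in kind. It introduces implicitly represented group cosets as a new atom type and writes a dedicated algorithm $\canGroup(Gf,A,\Delta\rho)$ carrying an explicit progress set $A\subseteq V$ with the invariant $\Delta=\norm_\Delta(\pstab_G(A))$; initially $A=V$. The intransitive case is \emph{not} ``split along orbits and glue'': instead one picks a smallest $\Delta$-orbit $A_1\subseteq A$, forms the chain $\pstab_G(A)\leq G_{A_1}:=\pstab_G(A_1)\leq G_{\{A_1\}}:=\stab_G(A_1)\leq G$, and climbs it with two recursive calls (progress sets $A_2=A\setminus A_1$ and then $A_1$). The last link is bridged by writing $G$ as at most $\binom{|V|}{|A_1|}$ right cosets $G_{\{A_1\}}g$, applying a separate lemma $\canShift$ --- which canonizes $(Gf,\Delta\rho)$ under the extra hypothesis $\Delta=\norm_\Delta(G)$ by encoding $f$ as a matching on $V\cupdot\tilde V$ and calling $\canMatch$ --- to each coset, and only then invoking $\canObj$ on the resulting bounded-size set of labeling cosets via the object-replacement lemma. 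No primitive-group structure theory is used; the transitive case is handled purely by the halving recursion, and the running time is governed by the recurrence $R(|A^*|,|A|)\leq 2^{6|A^*|}|A|^2$ on the number of recursive calls together with the $\binom{|V|}{|A_1|}$ bound on the coset decomposition.
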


The first problem reduces to the second, whereas
the second task reduces to the third canonization problem.
On the other hand, no polynomial-time reduction from canonization to
isomorphism is known.
In fact, a lot research in group isomorphism has been done in recent
years and none of them seem to provide
canonical forms \cite{DBLP:conf/icalp/BabaiCQ12,DBLP:journals/gcc/LewisW12,
DBLP:journals/tcs/RosenbaumW15,
DBLP:conf/isaac/GrochowQ15,
DBLP:journals/siamcomp/GrochowQ17,
Brooksbank17}.

\paragraph*{Our Technique}
Recently, we introduced a
canonization framework to obtain canonical forms of various objects matching fastest known
isomorphism algorithms \cite{DBLP:journals/corr/abs-1806-07466}.
In this paper, we show that the canonization framework can be used
to obtain isomorphism and normalizer algorithms that are even faster than the
existing ones.
The approach of canonization instead of isomorphism allows the use of the object
replacement paradigm.
In particular, our main algorithm combines an object replacement lemma
with a decomposition technique of permutation groups into cosets to obtain
an efficient recursion.

To handle permutation groups also when they are given by generating sets, we
need to extend the recent canonization framework.
We therefore expand the notion of combinatorial objects by a new type of atom.
This allows the use of implicitly given permutation groups
and combines permutational group theory with powerful canonization techniques.

\paragraph*{Organization of the Paper}
In \cref{sec:comb:objs:and:lab:cos},
we extend the notion of
combinatorial objects in our recent framework.
The main difference is that groups occurring in combinatorial
objects are allowed to have implicit representations via generating sets.

In \cref{sec:groups}, the first algorithm shows how a canonical labeling for a
subgroup can be shifted to a canonical labeling for a coset.
The second algorithm finally gives a canonical labeling for
permutation groups in simply-exponential time and thereby proves our main theorem.

\section{Preliminaries}

\paragraph{Group Theory}
For $t\in\NN$, we define $[t]:=\{1,\ldots,t\}$.
The composition of two functions $f:V\to U$ and $g:U\to W$ is denoted by
$fg$ and is defined as the function that first applies $f$ and then applies $g$.
The symmetric group on a set $V$ is denoted by $\sym(V)$
and the symmetric group of degree $t\in\NN$ is denoted by $\sym(t)$.
A (permutation) group coset $X$ over a set $V$ is as set of permutations
such that $X=Hf=\{hf\mid h\in H\}$ for some subgroup $H\leq\sym(V)$
and some permutation $f\in\sym(V)$.
Analogous to subgroups, we say that $Hf$ is a \emph{subcoset} of a
coset $Gf$, written $Hf\leq Gf$, if $Hf$ is a subset of $Gf$ that again forms a coset.
In the following let $G\leq\sym(V)$ be a group.
The normalizer of $Hf\leq\sym(V)$ in $G$ is denoted by
$\norm_G(Hf):=\{g\in G\mid g^{-1}Hfg=Hf
\text{ for all }g\in G\}$.
The setwise
stabilizer of $A\subseteq V$ in $G$ is denoted by
$\stab_G(A):=\{g\in G\mid g(a)\in A
\text{ for all }a\in A\}$.
The pointwise stabilizer of $A\subseteq V$ in $G$
is denoted by $\pstab_G(A):=\{g\in G\mid g(a)=a\text{ for all }a\in A\}$.
In the case that $G$ is the symmetric group on $V$, the subgroups
$\stab(A)_{\sym(V)}$, $\pstab(A)_{\sym(V)}$ and $\norm_{\sym(V)}(Hf)$ are also denoted as $\stab(A)$,
$\pstab(A)$ and $\norm(Hf)$, respectively.
A set $A\subseteq V$ is called $G$-invariant
if $\stab_G(A)=G$.
A partition $V=V_1\cupdot\ldots\cupdot V_t$ is called $G$-invariant
if each part $V_i$ is $G$-invariant.
The finest $G$-invariant partition (with non-empty parts) is called the $G$-orbit
partition. The parts of the orbit partition are also called $G$-orbits.
A group $G\leq\sym(V)$ is called transitive on a set $A\subseteq
V$ if $A$ is a $G$-orbit.

\paragraph{Generating Sets and Polynomial-Time Library}

For the basic theory of handling
permutation groups given by generating sets, we refer to \cite{seress}.
Indeed, most algorithms are based on strong generating sets.
However, given an arbitrary generating set, the Schreier-Sims algorithm is used to compute a
strong generating set (of size quadratic in the degree) in polynomial time.
In particular, we will use that the following tasks can be performed efficiently
when a group is given by a generating set.

\begin{enumerate}
  \item Given a vertex $v\in V$ and a
  group $G\leq\sym(V)$, the Schreier-Sims algorithm can be used to compute
  the pointwise stabilizer $\stab_G(v)$ in polynomial time.
  \item Given a group $G\leq\sym(V)$, a subgroup that has a polynomial time membership problem
  can be computed in time polynomial in the index and the degree of the subgroup.
\end{enumerate}

\section{Combinatorial Objects With Implicitly Given Group
Cosets}\label{sec:comb:objs:and:lab:cos}

We start to recall and extend our framework from
\cite{DBLP:journals/corr/abs-1806-07466}.
The crucial difference is that group cosets
occurring in combinatorial objects are allowed to have implicit representations
via generating sets.

\paragraph{Labeling Cosets}
A \emph{labeling coset} of a set $V$ 
is set of bijections $\Lambda$
such that
$\Lambda=\Delta\rho=\{\delta\rho\mid \delta\in\Delta\}$
for some subgroup $\Delta\leq\sym(V)$
and some bijection $\rho:V\to\{1,\ldots,|V|\}$.
We write $\lab(V)$ to denote
the labeling coset $\sym(V)\rho=
\{\sigma\rho\mid \sigma\in\sym(V)\}$
where $\rho:V\to\{1,\ldots,|V|\}$ is an arbitrary bijection.
Analogous to subgroups, a set $\Theta\tau$ is called a \emph{labeling subcoset}
of $\Delta\rho$, written $\Theta\tau\leq\Delta\rho$,
if the labeling coset $\Theta\tau$ is a subset of $\Delta\rho$.
The restriction of $\Delta\rho\leq\lab(V)$ to a
a $\Delta$-invariant set $A\subseteq V$ is defined as $(\Delta\rho)|_A:=\{\lambda|_A\mid \lambda\in\Delta\rho\}$.
Observe that the restriction $(\Delta\rho)|_A$ does not necessarily form a
labeling coset since
the $\rho(A)$ might be a set
of natural numbers that differs from $\{1,\ldots,|A|\}$.
To rectify this, let $\kappa$ be the unique bijection from
$\rho(A)$ to $\{1,\ldots,|A|\}$ that preserves the
ordering ``$<$'' on the natural numbers.
The \emph{induced labeling coset} of $\Delta\rho$ on $A\subseteq V$ is defined as the labeling coset
$(\Delta\rho) \downarrow_A:=(\Delta\rho)|_A\kappa$.

\paragraph{Hereditarily Finite Sets and Combinatorial Objects}
In contrast to the previous framework, we will model group cosets
as a third kind of atom in order to represent them implicitly.

Inductively, we define \emph{hereditarily finite sets}
over a ground set $V$.
Each vertex $X\in V$ and each
labeling coset $Y=\Delta\rho\leq\lab(V)$
and also each group coset $Z=Gf\leq\sym(V)$ is called an \emph{atom}.
Each atom is in particular a hereditarily finite set.
Inductively, if $X_1,\ldots,X_t$ are hereditarily finite sets,
then also $\CX=\{X_1,\ldots,X_t\}$ and
$\CX=(X_1,\ldots,X_t)$ are hereditarily finite sets where $t\in\NN\cup\{0\}.$
A \emph{(combinatorial) object} is a pair $(V,\CX)$ consisting of a ground set $V$ and a
hereditarily finite set $\CX$ over $V$.
The ground set $V$ is usually apparent
from context and
the combinatorial object $(V,\CX)$ is identified with
the hereditarily finite set $\CX$.
The set $\obj(V)$ denotes the set of all (combinatorial)
objects over $V$. 
An object is called \emph{ordered} if the ground set $V$ is linearly
ordered.
The linearly ordered ground sets that we consider are always
subsets of natural numbers with their standard ordering ``$<$''.
An object is \emph{unordered}
if $V$ is a usual set (without a given order).
Partially ordered objects in which in which some, but not all, atoms are
comparable are not considered.

\paragraph{Representation of Objects With Implicitly Given Group Cosets}
All labeling cosets but also all group cosets occurring
as atoms will be represented concisely by generating sets.
This is the precise reason why we model them as an atom
rather than a hereditary finite set.

For an object $\CX\in\obj(V)$,
the \emph{transitive closure} of $\CX$, denoted by $\tranCl(\CX)$,
is defined
as all objects that recursively occur in $\CX$, i.e.,
$\tranCl(X):=\{X\}$ for $X=v\in V$
or $X=\Delta\rho\leq\lab(V)$ or $X=Gf\leq\sym(V)$.
Inductively, the transitive closure is defined as
$\tranCl(\CX):=\{\CX\}\cup \bigcup_{i\in[t]} \tranCl (X_i)$
for $\CX=\{X_1,\ldots,X_t\}$ or
$\CX=(X_1,\ldots,X_t)$.
All objects are efficiently represented
as colored directed acyclic graphs over
the elements in its transitive closure.
Using this representation,
the input size (with implicitly given group cosets)
of an object $\CX$ is polynomial in $|\tranCl(\CX)|+|V|+t_{\operatorname{max}}$
where $t_{\operatorname{max}}$ is the maximal length of a tuple in
$\tranCl(\CX)$.

\paragraph{Applying Functions to Unordered Objects}
Let $V$ be an unordered ground set and let $V'$
be a ground set that is either ordered or unordered.
The image of an unordered object $\CX\in\obj(V)$
under a bijection $\mu:V\to V'$
is an object $\CX^\mu\in\obj(V')$ that is defined as follows.
The image $X^\mu$ of a vertex $X=v\in V$ is defined as $\mu(v)$.
The image $Y^\mu$ of a labeling coset $Y=\Delta\rho\leq\lab(V)$
is defined as $\mu^{-1}\Delta\rho$.
The image $Z^\mu$ of a group coset $Z=Gf\leq\sym(V)$
is defined as $\mu^{-1}Gf\mu$.
Inductively, 
the image $\CX^\mu$ of an object $\CX=\{X_1,\ldots,X_t\}$ is defined
$\{X_1^\mu,\ldots,X_t^\mu\}$.
Similarly, the image $\CX^\mu$ of an object $\CX=
(X_1,\ldots,X_t)$ is defined as
$(X_1^\mu,\ldots,X_t^\mu)$.
Notice that the way we apply functions to cosets
does not depend on whether they are modeled as
an atom or as a hereditarily finite set.

\paragraph{Isomorphisms and Automorphisms of Unordered Objects}
The set of all isomorphisms
from an object $\CX\in\obj(V)$ and to an object $\CX'\in\obj(V')$ is denoted by
$\iso(\CX;\CX'):=\{\phi:V\to V'\mid \CX^\phi=\CX'\}$.
The set of all automorphisms of an object $\CX$, denoted by
$\aut(\CX):=\iso(\CX;\CX)$. Both isomorphisms and
automorphisms are defined for objects that are unordered only.
For specific objects, the automorphism group $\aut(\CX)$
often leads to a familiar notion,
e.g., $\aut((\Theta\tau,\Delta\rho))=\Theta\cap\Delta$,
$\aut((Gf,\Delta\rho))=\norm_\Delta(Gf)$
$\aut((A,\Delta\rho))=\stab_\Delta(A)$
and
$\aut((a_1,\ldots,a_t,\Delta\rho))=\pstab_\Delta(\{a_1,\ldots,a_t\})$
where $A\subseteq V$,
$\Theta\tau,\Delta\rho\leq\lab(V)$ and $Gf\leq\sym(V)$.

For two unordered sets $V$ and $V'$, the set $\iso(V;V')$ is also used to denote the set of
all bijections from $V$ to $V'$.
This notation indicates and stresses
that both $V$ and $V'$ have to be unordered.
Additionally, it is used in a context where
$\phi\in\iso(V;V')$ is seen as an isomorphism $\phi\in\iso(\CX;\CX^\phi)$.

\paragraph{The Linear Ordering of Ordered Objects}
We define a linear ordering of objects that remains polynomial-time computable
when group cosets are implicitly given.

For this paragraph, we
assume objects to be ordered where
$V\subseteq\NN$ and we
define a linear order ``$\prec$'' on them.
For two atoms $X,Y\in\NN$,
the natural ordering is adapted, i.e.,
$X\prec Y$ if $X<Y$.
For two sets $\CX=\{X_1,\ldots,X_s\}$ and
$\CY=\{Y_1,\ldots,Y_t\}$ on which the order ``$\prec$''
is already defined for the elements $X_i$ and $Y_j$, the
linear order is defined as follows.
We say that $\CX\prec\CY$ if
$|\CX|< |\CY|$ or if $|\CX|=|\CY|$ and the smallest
element in $\CX\setminus\CY$ is smaller
than the smallest element in $\CY\setminus\CX$.
For two tuples $\CX=(X_1,\ldots,X_s)$ and
$\CY=(Y_1,\ldots,Y_t)$
where ``$\prec$'' is already defined
for the entries, the
linear order is defined as follows. We say that $\CX\prec\CY$
if $s$ is smaller than $t$
or if $s=t$ and for the smallest position $i\in[t]$ for which
$X_i\neq Y_i$, it holds that
$X_i\prec Y_i$.
We extend the order to permutations
of natural numbers as follows.
For two permutations $\sigma_1,\sigma_2\in\sym(\{1,\ldots,|V|\})$
we say that $\sigma_1\prec \sigma_2$ if there
is an $i\in \{1,\ldots,|V|\}$ such that $\sigma_1(i) <\sigma_2(i)$
and $\sigma_1(j)=\sigma_2(j)$ for all $1\leq j<i$.
The definition is extended to
labeling cosets $\Delta\rho,\Theta\tau\leq\lab(\{1,\ldots,|V|\})$.
Similar to the case of sets, we say that
$\Delta\rho\prec\Theta\tau$ if
$|\Delta\rho|\leq|\Theta\tau|$ or if
$|\Delta\rho|=|\Theta\tau|$ and
the smallest element of $\Delta\rho\setminus \Theta\tau$
is smaller than the smallest element
of $\Theta\tau \setminus \Delta\rho$.
The definition for group cosets
$G_1f_1,G_2f_2\leq\sym(\{1,\ldots,|V|\})$ is analogous.
We say $G_1f_1\prec G_2f_2$ if
$|G_1f_1|\leq|G_2f_2|$ or if
$|G_1f_1|=|G_2f_2|$ and
the smallest element of $G_1f_1\setminus G_2f_2$
is smaller than the smallest element
of $G_2f_2 \setminus G_1f_1$.
Indeed, for two cosets $X,Y\leq\sym(\{1,\ldots,|V|\})=\lab(\{1,\ldots,|V|\})$
the ordering ``$\prec$'' for $X$ and $Y$ can be computed in time polynomial in
$|V|$ when the cosets are given by generating sets
(\cite{tree-width}, Corollary 22).
For completeness, we define
$X\prec Y\prec Z\prec \CX\prec \CY$ for all integers $X\in\NN$,
all labeling cosets $Y=\Delta\rho\leq\lab(\{1,\ldots,|V|\})$,
all group cosets $Z=Gf\leq\sym(\{1,\ldots,|V|\})$,
all tuples $\CX$ and all sets $\CY$. 
We say that $\CX\preceq\CY$ if $\CX=\CY$ or $\CX\prec\CY$.

The previous paragraph shows that the linear ordering remains polynomial-time
computable for objects that contain implicitly given group cosets.
\begin{lem}\label{lem:prec}
The ordering ``$\prec$'' on pairs of ordered objects can be
computed in polynomial time in the input size (with implicitly given group
cosets).
\end{lem}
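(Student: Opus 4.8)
The plan is to compute, by dynamic programming over the transitive closures of the two input objects, the full comparison data for every pair of sub-objects that can arise. Concretely, for input objects $\CX$ and $\CY$ (both ordered, so their ground sets are subsets of $\NN$), I would determine for every pair $(X,Y)\in\tranCl(\CX)\times\tranCl(\CY)$ which of $X\prec Y$, $X=Y$, $Y\prec X$ holds. Since each object is stored as a colored DAG whose nodes are exactly the elements of its transitive closure, these pairs can be processed in a topological order of the product DAG, so that a pair is treated only once all pairs of its proper sub-objects have been treated. The number of pairs is at most $|\tranCl(\CX)|\cdot|\tranCl(\CY)|$, which is polynomial in the combined input size, so it suffices to show that each individual pair is resolved in polynomial time once the answers for all pairs of sub-objects are available.

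For the base cases I would follow the definition directly. If $X$ and $Y$ have different ``kinds'' among integer, labeling coset, group coset, tuple, set, then the completeness convention $X\prec Y\prec Z\prec\CX\prec\CY$ settles the comparison by a single type check. If both are integers, compare them numerically. If both are labeling cosets or both are group cosets (living in $\sym(\{1,\ldots,m\})$ for the relevant $m$, since the objects are ordered), then by definition one first compares the orders $|\Delta\rho|$, $|\Theta\tau|$ (resp.\ $|G_1f_1|$, $|G_2f_2|$), which are computable in polynomial time via Schreier--Sims, and in case of equal order one compares the smallest element of the symmetric difference; this last comparison, as well as deciding equality of the cosets, is exactly what \cite{tree-width}, Corollary~22 provides in time polynomial in $m$ when the cosets are given by generating sets.

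For the inductive cases I would use the already-computed data. If $X=(X_1,\ldots,X_s)$ and $Y=(Y_1,\ldots,Y_t)$ are tuples, compare $s$ and $t$; if $s=t$, scan $i=1,2,\ldots$ and return the recorded comparison of $X_i$ and $Y_i$ at the first position where $X_i\neq Y_i$, and return ``equal'' if no such position exists. This is $O(t_{\max})$ table look-ups. If $X=\{X_1,\ldots,X_s\}$ and $Y=\{Y_1,\ldots,Y_t\}$ are sets, first use the recorded equality relation to compute the true cardinalities (deduplicating children of the DAG nodes that represent equal objects); if the cardinalities differ, compare them; otherwise use the recorded order to identify the $\prec$-smallest element of $X\setminus Y$ and of $Y\setminus X$ and compare these two elements via the recorded entry for the corresponding pair in $\tranCl(\CX)\times\tranCl(\CY)$. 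Each such step is again a polynomial number of look-ups.

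The only genuinely non-combinatorial ingredient is the base case of comparing, testing equality of, and computing orders of implicitly given cosets, and this is precisely handled by \cite{tree-width}, Corollary~22 together with standard Schreier--Sims machinery; everything else is bookkeeping over the DAG representation. I therefore expect the coset comparison to be the ``hard'' step in the sense that it is the only place where the implicit representation matters, while the running time analysis --- summing a polynomial cost over a polynomial number of pairs --- is routine. This yields the claimed polynomial-time bound in the input size with implicitly given group cosets.
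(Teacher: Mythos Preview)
Your proposal is correct and matches the paper's approach: the paper does not give a separate proof of this lemma but simply notes that it follows from the preceding paragraph, the only nontrivial ingredient being the polynomial-time comparison of implicitly given cosets via \cite{tree-width}, Corollary~22, with the rest being the obvious structural recursion you spell out. One small technicality worth tightening: to identify the $\prec$-smallest element of $X\setminus Y$ (and of $Y\setminus X$) you must compare children of the \emph{same} input with each other, so your dynamic-programming table should range over $(\tranCl(\CX)\cup\tranCl(\CY))^2$ rather than just $\tranCl(\CX)\times\tranCl(\CY)$; this is an immediate fix and does not affect the polynomial bound.
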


We list the definitions and results obtained in the recent canonization
framework.

\begin{defn}[\cite{DBLP:journals/corr/abs-1806-07466}]
Let $\CC\subseteq\obj(V)$ be an isomorphisms-closed class of unordered objects.
A \emph{canonical labeling function} $\can$ is
a function that assigns each unordered object
$\CX\in\CC$ a labeling coset $\can(\CX)=\Lambda\leq\lab(V)$
such that:

\begin{enumerate}[(\textnormal{CL}1)]
  \item\label{ax:cl1} $\can(\CX)=\phi\can(\CX^\phi)$
  for all $\phi\in\iso(V;V')$ and,
  \item\label{ax:cl2} $\can(\CX)=\aut(\CX)\pi$
  for some (and thus for all) $\pi\in\can(\CX)$.
\end{enumerate}
In this case, the labeling coset $\Lambda$ is also called a \emph{canonical
labeling} for $\CX$.
\end{defn}

\begin{lem}[\cite{DBLP:journals/corr/abs-1806-07466}, Object
replacement lemma]\label{lem:rep} Let $\CX=\{X_1,\ldots,X_t\}$ be an object and
let $\can$ and $\canSet$ be canonical labeling functions.
Define $\CX^\set:=\{\Delta_1\rho_1,\ldots,\Delta_t\rho_t\}$
where $\Delta_i\rho_i:=\can(X_i)$ is a canonical labeling for $X_i\in\CX$.
Assume that
$X_i^{\rho_i}=X_j^{\rho_j}$ for all $i,j\in [t]$.
Then $\canObj(\CX):=\canSet(\CX^\set)$ defines a canonical labeling
for $\CX$.
\end{lem}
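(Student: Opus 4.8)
The plan is to verify the two axioms (CL1) and (CL2) for $\canObj(\CX):=\canSet(\CX^\set)$, where $\CX^\set=\{\Delta_1\rho_1,\ldots,\Delta_t\rho_t\}$ collects the canonical labeling cosets $\Delta_i\rho_i=\can(X_i)$. Everything hinges on two facts: the \emph{compatibility identity} $(\CX^\phi)^\set=(\CX^\set)^\phi$ for every $\phi\in\iso(V;V')$, and the \emph{automorphism identity} $\aut(\CX^\set)=\aut(\CX)$. Once these are available, both axioms for $\canObj$ follow by simply invoking the corresponding axioms for $\canSet$ (applicable since $\CX^\set$ and its images lie in the isomorphism-closed domain of $\canSet$).

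First I would prove the compatibility identity. By the definition of applying a bijection to a labeling-coset atom, $(\Delta_i\rho_i)^\phi=\phi^{-1}\Delta_i\rho_i$, so $(\CX^\set)^\phi=\{\phi^{-1}\can(X_i)\mid i\in[t]\}$; on the other hand $(\CX^\phi)^\set=\{\can(X_i^\phi)\mid i\in[t]\}$, and (CL1) for $\can$ gives $\can(X_i^\phi)=\phi^{-1}\can(X_i)$, so the two sets coincide. Specializing $\phi$ to an element of $\aut(\CX)$ already yields $\aut(\CX)\subseteq\aut(\CX^\set)$. I would also record that the hypothesis ``$X_i^{\rho_i}=X_j^{\rho_j}$ for all $i,j$'' passes to $\CX^\phi$ (if $\rho_i\in\can(X_i)$ then $\phi^{-1}\rho_i\in\can(X_i^\phi)$ and $(X_i^\phi)^{\phi^{-1}\rho_i}=X_i^{\rho_i}$), so $\canObj$ is defined on $\CX^\phi$ as well, and then (CL1) for $\canObj$ reads $\phi\,\canObj(\CX^\phi)=\phi\,\canSet\big((\CX^\set)^\phi\big)=\phi\phi^{-1}\canSet(\CX^\set)=\canObj(\CX)$ by (CL1) for $\canSet$.

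The hard part is the reverse inclusion $\aut(\CX^\set)\subseteq\aut(\CX)$, which is the one step that genuinely uses the hypothesis. The idea is that, with $\CY$ denoting the common ordered object $X_1^{\rho_1}=\cdots=X_t^{\rho_t}$ (well-defined because replacing $\rho_i$ by $\delta\rho_i$ with $\delta\in\Delta_i=\aut(X_i)$ does not change $X_i^{\rho_i}$), one can \emph{reconstruct} $X_i$ from the bare coset $\Delta_i\rho_i$ via $X_i=\CY^{\mu^{-1}}$ for any $\mu\in\Delta_i\rho_i$. Now take $\psi\in\aut(\CX^\set)$; for each $i$ there is $j$ with $\psi^{-1}\Delta_i\rho_i=(\Delta_i\rho_i)^\psi=\Delta_j\rho_j$, and since $\rho_i\in\Delta_i\rho_i$ we have $\psi^{-1}\rho_i\in\Delta_j\rho_j$, so the reconstruction formula gives $X_j=\CY^{\rho_i^{-1}\psi}=(\CY^{\rho_i^{-1}})^\psi=X_i^\psi$. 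As $i\mapsto j$ is exactly the permutation that $\psi$ induces on the set $\CX^\set$, this proves $\CX^\psi=\CX$, i.e.\ $\psi\in\aut(\CX)$; as a by-product, distinct elements of $\CX$ have distinct canonical labelings (if $\can(X_i)=\can(X_j)$ the formula forces $X_i=X_j$), so no separate injectivity check is needed.

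Finally, (CL2) for $\canObj$ falls out: by (CL2) for $\canSet$ we have $\canObj(\CX)=\canSet(\CX^\set)=\aut(\CX^\set)\,\pi$ for every $\pi$ in this coset, and $\aut(\CX^\set)=\aut(\CX)$ by the automorphism identity. The only routine care I anticipate is consistently tracking the composition convention (the paper's ``$fg$ means first $f$, then $g$'', so that $\CX^{\mu\nu}=(\CX^\mu)^\nu$ and $(\Delta\rho)^\mu=\mu^{-1}\Delta\rho$) throughout these coset manipulations.
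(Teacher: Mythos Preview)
The paper does not include a proof of this lemma; it is stated with a citation to the earlier canonization framework and used as a black box. So there is nothing to compare against, and the relevant question is simply whether your argument is correct. It is: (CL1) follows from the compatibility $(\CX^\phi)^\set=(\CX^\set)^\phi$ (which is exactly (CL1) for $\can$ read elementwise), and (CL2) from the automorphism identity $\aut(\CX^\set)=\aut(\CX)$, whose nontrivial inclusion you establish by reconstructing each $X_i$ from its labeling coset $\Delta_i\rho_i$ via the common canonical image $\CY$. Your side observation that distinct $X_i$ yield distinct $\Delta_i\rho_i$ (so that the passage $\CX\mapsto\CX^\set$ loses no multiplicities) is also correct and worth stating.

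One purely cosmetic point: the paper defines $\CX^\mu$ only for \emph{unordered} $\CX$, so the reconstruction formula $X_i=\CY^{\mu^{-1}}$ with $\CY$ ordered is not literally covered by the definitions given here. You can sidestep this without changing the idea: from $\psi^{-1}\rho_i=\delta_j\rho_j$ with $\delta_j\in\Delta_j=\aut(X_j)$ one gets
\[
(X_i^{\psi})^{\delta_j\rho_j}=(X_i^{\psi})^{\psi^{-1}\rho_i}=X_i^{\rho_i}=X_j^{\rho_j}=(X_j^{\delta_j})^{\rho_j}=X_j^{\delta_j\rho_j},
\]
and since $\CZ\mapsto\CZ^{\delta_j\rho_j}$ is injective on unordered objects (its inverse is $\CW\mapsto\CW^{(\delta_j\rho_j)^{-1}}$ once one extends the action in the obvious way, or simply because each atom transformation is a bijection), this yields $X_i^{\psi}=X_j$ while staying entirely within the paper's conventions.
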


\begin{lem}[\cite{DBLP:journals/corr/abs-1806-07466}, Lemma
5]\label{lem:canMatch}
Canonical labelings for pairs $(M,\Delta\rho)\in\obj(V)$
where $M\subseteq V_1\times V_2$ is a matching
and $\Delta\rho$ is a labeling coset of $V=V_1\cupdot V_2$
can be computed in time
$2^{\CO(k_2)}|V|^{\CO(1)}$ where $k_2$ is the size of the largest $\Delta$-orbit of $V_2\subseteq V$.
\end{lem}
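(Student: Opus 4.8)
The plan is to compute $\aut((M,\Delta\rho))=\{\delta\in\Delta\mid M^\delta=M\}$ together with a canonical offset by a Luks-style divide-and-conquer recursion that halves $V_2$. Two properties of the input are used throughout: (i) $V_1$ and $V_2$ are $\Delta$-invariant, so every $\Delta$-orbit contained in $V_2$ has size at most $k_2$; and (ii) $M$ is a \emph{matching}, so any partition of $V_2$ induces a partition of $M$ whose pieces are vertex-disjoint on \emph{both} sides $V_1$ and $V_2$, and pairwise distinguishable when the partition is $\Delta$-invariant. Property (ii) is what keeps the $V_1$-side cheap, and it is the only place the matching hypothesis enters; without it the statement would amount to string canonization under an arbitrary group.

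The first step, which is free, is to refine $V_2$ into its $\Delta$-orbits $O_1,\dots,O_m$. Writing $M_j\subseteq M$ for the edges meeting $O_j$, the $M_j$ are pairwise vertex-disjoint on both sides and pairwise distinguishable (as $\Delta$ fixes each $O_j$ setwise). I would canonize $(M,\Delta\rho)$ coordinatewise: recursively canonize $(M_1,\Delta\rho)$ to get $\Lambda_1=\aut((M_1,\Delta\rho))\pi_1$, then recursively canonize $(M_2\cupdot\dots\cupdot M_m,\Lambda_1)$, and so on, finally sorting the orbits by the $\prec$-order of the canonical forms produced (recursing on ties) so that the output does not depend on the enumeration of $O_1,\dots,O_m$. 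Each step only passes to a subcoset and commits to one outcome, so the cost is additive over $j$ up to $|V|^{\CO(1)}$ overhead. This reduces the problem to the case where $V_2$ is a single $\Delta$-orbit $O$ with $b:=|O|\le k_2$.

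In the single-orbit case, the base case $b=1$, say $O=\{v\}$, is immediate: every $\delta\in\Delta$ fixes $v$, so if $v$ is matched by $M$ to some $a\in V_1$ then $M^\delta=M$ forces $\delta(a)=a$, hence $\aut((M,\Delta\rho))=\pstab_\Delta(\{a\})$ and a canonical labeling is the canonical labeling of the individualization $(a,\Delta\rho)$, computable in polynomial time via Schreier--Sims; if $v$ is unmatched, return $\Delta\rho$. For $b\ge 2$, pick an arbitrary $O'\subsetneq O$ with $|O'|=\lceil b/2\rceil$ and compute $\Psi:=\stab_\Delta(O')$: membership in $\Psi$ is a polynomial-time test and $[\Delta:\Psi]$ equals the size of the $\Delta$-orbit of the set $O'\subseteq O$, which is less than $2^b$, so $\Psi$ is computable in time $2^{\CO(b)}|V|^{\CO(1)}$ by the polynomial-time library. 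Decompose $\Delta$ into cosets $\delta_1\Psi,\dots,\delta_s\Psi$ with $s<2^b$ and, for each $i\in[s]$, recursively canonize $(M,\delta_i\Psi\rho)$ (a genuine labeling subcoset $\delta_i\Psi\rho\le\Delta\rho$), obtaining $\Lambda_i$; this recursive call makes progress because $O$ splits into $\delta_i\Psi\delta_i^{-1}$-orbits all of size $\le\lceil b/2\rceil$, as $O'$ and $O\setminus O'$ are $\Psi$-invariant and $\delta_i$ fixes $O$ setwise. After renaming so that $\Lambda_1=\dots=\Lambda_r$ are the $\prec$-minimal outputs, return $\langle\Lambda_1,\dots,\Lambda_r\rangle$, the smallest labeling coset containing their union. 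The recurrence $T(b)\le 2^{\CO(b)}\bigl(T(\lceil b/2\rceil)+|V|^{\CO(1)}\bigr)$ solves to $T(b)=2^{\CO(b)}|V|^{\CO(1)}$ since $b+b/2+b/4+\cdots\le 2b$, and together with the additive orbit-refinement step the total running time is $2^{\CO(k_2)}|V|^{\CO(1)}$.

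For correctness, (CL1) holds because every ingredient — the $\Delta$-orbit partition of $V_2$, the induced splitting of $M$, the choice of $\Psi$, the coset transversal, and the $\prec$-minimization — is equivariant under an isomorphism $\phi\colon V\to V'$, so the whole construction commutes with applying $\phi$ (the arbitrary choice of $O'$ is absorbed by the branching over the coset transversal). For (CL2) one argues as usual for this recursion template: an automorphism of $(M,\Delta\rho)$ maps $O'$ to $\delta_i(O')$ for exactly one $i$ and is therefore seen in branch $i$, the $\prec$-minimal branches are precisely those reached by automorphisms, and since the recursive outputs are already of the form $\aut(\cdot)\pi$ their union spans $\aut((M,\Delta\rho))$ composed with the common minimal labeling; the orbit-refinement step is handled coordinatewise, using the standard sequential treatment of tuples and \cref{lem:rep} for the set structure of $M$ within an orbit. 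The step I expect to be the main obstacle — and the reason the bound depends on $k_2$ rather than on $|V_1|$ — is exactly this interaction: because $M$ is a matching and halving is performed on $V_2$ only, once a $V_2$-orbit has shrunk to a single vertex the matching pins down at most one vertex of $V_1$, so the recursion never setwise-stabilizes a subset of $V_1$ (which could lie in a $\Delta$-orbit much larger than $k_2$) — every individualization touching $V_1$ is a single point forced by an edge of $M$ and costs only $|V|^{\CO(1)}$. Making this precise, and verifying that the minimal branches glue into a single labeling coset, is the heart of the matter; the rest is the canonical-labeling calculus recalled above.
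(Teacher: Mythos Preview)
Your overall strategy coincides with the one the cited reference (and this paper's transitive-case template) uses: reduce to a single $\Delta$-orbit on $V_2$, halve that orbit, branch over a coset decomposition of $\Delta$ modulo the setwise stabiliser of the half, recurse, and keep the $\prec$-minimal branches. The running-time recurrence and the role of the matching hypothesis are also identified correctly.

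There is, however, a genuine gap in your (CL1) argument. You pick an \emph{arbitrary} half $O'\subsetneq O$ and assert that this arbitrariness ``is absorbed by the branching over the coset transversal''. It is not. The left cosets $\delta_i\Psi$ of $\Psi=\stab_\Delta(O')$ correspond bijectively to the sets $\delta_i(O')$, i.e.\ to the $\Delta$-orbit of $O'$ inside the collection of $\lceil b/2\rceil$-subsets of $O$. Transitivity of $\Delta$ on $O$ does \emph{not} imply transitivity on half-subsets (e.g.\ $\Delta=\langle(1234)\rangle$ on $O=\{1,2,3,4\}$ has two orbits of $2$-subsets, $\{\{1,2\},\{2,3\},\{3,4\},\{1,4\}\}$ and $\{\{1,3\},\{2,4\}\}$). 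Hence two runs of your procedure on isomorphic inputs may branch over \emph{different} families of subcosets, and nothing you wrote forces the resulting $\prec$-minimisations to agree. Your parenthetical justification is therefore unsupported, and with it the equivariance needed for (CL1).

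The fix is exactly what the paper does in its transitive template: make the choice of half canonical by passing to the ordered domain via $\rho$. Concretely, set $O^{\ord}:=O^\rho$, take $O^{\ord}_1:=\{1,\dots,\lfloor|O^{\ord}|/2\rfloor\}$ there, and work with $\Psi^{\ord}:=\stab_{\Delta^{\ord}}(O^{\ord}_1)$ for $\Delta^{\ord}:=(\Delta\rho)^\rho$. Since $O^\rho$ and $(\Delta\rho)^\rho$ are invariant under replacing $(M,\Delta\rho)$ by $(M^\phi,\phi^{-1}\Delta\rho)$, this choice \emph{is} equivariant, and (CL1) then follows as you sketched. A secondary imprecision: the $\prec$-comparison that selects the surviving branches must be made on the \emph{canonical forms} $(M,\Delta\rho)^{\rho_i}$, not on the labeling cosets $\Lambda_i$ themselves (your phrasing ``$\Lambda_1=\dots=\Lambda_r$ are the $\prec$-minimal outputs'' conflates the two); otherwise, already for $M=\emptyset$ only one branch would survive and (CL2) would fail.
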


\begin{theo}[\cite{DBLP:journals/corr/abs-1806-07466},
Theorem 17]\label{theo:canObj} Canonical labelings for
objects $(\CX,\Delta\rho)\in\obj(V)$ can be computed in time
$2^{\CO(k)}n^{\CO(1)}$
where $n$ is the input size (when group cosets are explicitly given) and
$k$ is the size of the largest $\Delta$-orbit of $V$.
\end{theo}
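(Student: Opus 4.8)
The plan is to prove the statement by structural recursion on the object $\CX$, following the inductive definition of hereditarily finite sets and threading the labeling coset $\Delta\rho$ along as the ``known symmetry''. Throughout we maintain the invariant that a subproblem is only ever handed a labeling subcoset of $\Delta\rho$ or an induced labeling coset $(\Delta\rho)\downarrow_A$, so that the largest orbit of the current group never exceeds $k$ (restrictions and the operation $\downarrow_A$ only refine the orbit partition). There are three kinds of nodes. An atom $X=v\in V$ paired with $\Delta\rho$ is canonized by computing the pointwise stabilizer $\pstab_\Delta(v)$ via Schreier--Sims and returning the labeling coset $\{\lambda\in\Delta\rho\mid\lambda(v)\text{ minimal}\}$, which has automorphism group $\pstab_\Delta(v)=\aut((v,\Delta\rho))$ and depends canonically only on $v$ and $\rho$. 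A labeling-coset atom $\Theta\tau\leq\lab(V)$ with $\Delta\rho$ (where $\aut((\Theta\tau,\Delta\rho))=\Theta\cap\Delta$) is canonized by a reduction to the matching-canonization problem and an application of \cref{lem:canMatch}, which already exhibits the $2^{\CO(k)}$ cost. A group-coset atom $Gf\leq\sym(V)$, when given explicitly, is literally a set of pairs of vertices, hence is handled by the set and tuple cases below and needs no separate treatment --- this is precisely the situation that the present paper later has to improve upon.

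For a tuple $\CX=(X_1,\ldots,X_t)$ the entries are processed from left to right: put $\Lambda_0:=\Delta\rho$ and, recursively, $\Lambda_i:=\can((X_i,\Lambda_{i-1}))$ using $\Lambda_{i-1}$ as the known symmetry; return $\Lambda_t$. Properties (CL1) and (CL2) then follow by a short induction on $t$ from the coset identities $\can(\CY)=\phi\can(\CY^\phi)$ and $\can(\CY)=\aut(\CY)\pi$. The set case $\CX=\{X_1,\ldots,X_t\}$ is the heart of the argument. First recursively compute $\can(X_i)$ for each $i$, read off the canonical form $X_i^{\rho_i}$ with $\Delta_i\rho_i:=\can(X_i)$, and group the elements so that $X_i,X_j$ lie in the same class iff $X_i^{\rho_i}=X_j^{\rho_j}$; order the classes by the ``$\prec$''-order of their common canonical form (\cref{lem:prec}). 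This canonically replaces the unordered set by a tuple $(S_1,\ldots,S_p)$ of pairwise non-isomorphic, internally homogeneous sets, which is handled by the tuple case. For a single homogeneous set $S=\{X_{i_1},\ldots,X_{i_q}\}$, whose elements all share a canonical form, the object replacement lemma (\cref{lem:rep}) applies: replacing each element by its canonical labeling $\Delta_{i_j}\rho_{i_j}$ reduces the task to canonizing the set of labeling cosets $\{\Delta_{i_1}\rho_{i_1},\ldots,\Delta_{i_q}\rho_{i_q}\}$ together with $\Delta\rho$, so that the genuinely combinatorial content has been stripped away.

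The remaining subroutine --- canonizing a set of labeling cosets together with $\Delta\rho$ --- is reduced, using the orbit partition of $\Delta$ together with the maps $\rho_{i_j}$, to a bounded instance of the matching-canonization problem and solved by \cref{lem:canMatch} in time $2^{\CO(k)}|V|^{\CO(1)}$. For the overall running time, note that the recursion tree has size polynomial in $n$: every node corresponds to an element of $\tranCl(\CX)$ or to one of the auxiliary objects produced by grouping, and each auxiliary object can be charged to a node of $\CX$. The local work at each node is $2^{\CO(k)}n^{\CO(1)}$, and these exponential factors do not nest, because a $2^{\CO(k)}$ factor is incurred only by the matching subroutine of \cref{lem:canMatch}, which is invoked on pairwise independent subproblems whose groups still have orbits of size at most $k$. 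Summing over the tree yields the claimed $2^{\CO(k)}n^{\CO(1)}$ bound.

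I expect the main obstacle to be two-fold. First, \emph{correctness}: one must check that (CL1) and (CL2) survive each combination step --- threading through a tuple, the canonical reordering of the classes of a set, and the object-replacement step --- which amounts to an elementary but careful bookkeeping of how the identities $\can(\CY)=\phi\can(\CY^\phi)$ and $\can(\CY)=\aut(\CY)\pi$ behave under composition, restriction, and the operation $\downarrow_A$. Second, \emph{controlling the orbit parameter}: for the structural steps it is immediate that only subcosets and induced cosets of $\Delta\rho$ arise, but for object replacement one must verify that the labeling cosets $\Delta_{i_j}\rho_{i_j}$ produced for the elements of a homogeneous set, when combined with $\Delta\rho$, still yield a group all of whose orbits have size at most $k$; this is the delicate point that keeps the entire recursion within the simply-exponential budget, and it is where the precise interplay between the object replacement lemma and the orbit structure must be pinned down.
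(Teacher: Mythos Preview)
This theorem is not proved in the present paper; it is quoted verbatim from \cite{DBLP:journals/corr/abs-1806-07466} (Theorem~17) and used as a black box inside the proof of \cref{theo:canGroup}. There is therefore no in-paper proof to compare your proposal against.

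As a reconstruction of what the cited paper does, your outline is broadly on the right track: the argument in \cite{DBLP:journals/corr/abs-1806-07466} is indeed a structural recursion on $\CX$ that threads a labeling coset through tuples, canonically partitions sets into isomorphism classes, applies object replacement to reduce a homogeneous set to a set of labeling cosets, and bottoms out in a matching-type subroutine with cost $2^{\CO(k)}$. Two places in your sketch are thinner than the actual proof requires. First, the step ``canonizing a set of labeling cosets together with $\Delta\rho$ is reduced \ldots\ to a bounded instance of the matching-canonization problem'' is doing real work that you have not spelled out; in the cited paper this is a separate lemma with its own orbit-splitting recursion, not a one-line reduction. Second, your running-time argument (``the $2^{\CO(k)}$ factors do not nest'') needs the explicit recurrence that governs the transitive/intransitive split inside that subroutine; simply observing that only subcosets and induced cosets arise is not enough to rule out a multiplicative blow-up across recursion levels. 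If you want a self-contained proof, you will need to reproduce those two pieces rather than defer to \cref{lem:canMatch} alone.
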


\section{Canonization of Implicitly Given Permutation Groups}\label{sec:groups}

Before we canonize group cosets in general, we consider
canonical labeling functions for objects $\CX=(Gf,\Delta\rho)$
consisting of a group coset and a labeling coset
with a particular restriction to $G$ and $\Delta$.

\problem{\canShift\label{prob:CL:shift}}
{$(Gf,\Delta\rho)\in\obj(V)$ where $Gf\leq\sym(V)$,
$\Delta\rho\leq\lab(V)$, $V$ is an
unordered set and with the restriction that $\Delta=\norm_{\Delta}(G)$}
{(Gf,\Delta\rho)}
{(\phi^{-1}Gf\phi,\phi^{-1}\Delta\rho)}
{\norm_{\Delta}(Gf)}

The reader is encouraged to take a moment to verify that 
for input objects $\CX=(Gf,\Delta\rho)$, the
Conditions (CL1) and
(CL2) stated here agree with Conditions (CL1) and (CL2)
that are defined for objects in general.

The requirement that $\Delta=\norm_{\Delta}(G)$
says that the labeling coset $\Delta\rho$ consists of
canonical labelings for the group $G$.
Thus, the problem can be seen as the task of shifting the canonical labeling
$\Delta\rho=\norm_\Delta(G)\rho$ for the group $G$
to a canonical labeling $\Lambda=\norm_\Delta(Gf)\pi$ for the group coset $Gf$.

\begin{lem}\label{lem:can:shift}
A function $\canShift$
solving \cref{prob:CL:shift}
can be computed in time
$2^{\CO(k)}|V|^{\CO(1)}$
where
$k$ is the size of the largest $\Delta$-orbit of $V$.
\end{lem}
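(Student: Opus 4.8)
The plan is to reduce $\canShift$ to the already-available canonization of objects with explicitly given cosets (Theorem~\ref{theo:canObj}). The key observation is that the constraint $\Delta=\norm_\Delta(G)$ lets us decompose $\Delta\rho$ into left cosets of a group whose orbits are controlled. Concretely, first I would compute the pointwise orbit structure: let $\Delta_1,\ldots,\Delta_m$ be the $\Delta$-orbits of $V$ (equivalently the $G$-orbits, since $G\trianglelefteq\Delta$ up to the obvious identification), each of size at most $k$. The labeling coset $\Delta\rho$, restricted to each orbit, gives an induced labeling coset $(\Delta\rho)\downarrow_{\Delta_i}$; but more importantly, on each orbit the group $G$ acts as a transitive (or at least orbit-wise) group of degree $\le k$, so $|G|$ restricted to an orbit is bounded and the relevant data is small. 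The idea is to break the problem along orbits and recombine using the object replacement lemma (Lemma~\ref{lem:rep}).

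The main step: I would form the object $\CX'=(G^\rho f^\rho,\ (\Delta\rho)^\rho)$ obtained by applying $\rho$ — but of course $\rho$ is not canonical, so instead the right move is to work inside the coset $\Delta\rho$ directly. For each $\pi\in\Delta\rho$ we have $(Gf)^\pi = \pi^{-1}Gf\pi$, and since $\pi = \delta\rho$ with $\delta\in\Delta=\norm_\Delta(G)$, we get $G^\pi = \rho^{-1}\delta^{-1}G\delta\rho = \rho^{-1}G\rho = G^\rho$, a \emph{fixed} explicitly-computable group independent of the choice of $\pi\in\Delta\rho$. So $(Gf)^\pi = G^\rho\cdot f^{\pi}$ where only the translating element $f^\pi$ varies. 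Thus shifting the canonical labeling of $G$ to one of $Gf$ amounts to canonizing the coset element $f^\pi$ inside the group $G^\rho$, which is a group of degree $|V|$ but whose orbits each have size $\le k$. At this point I would invoke Theorem~\ref{theo:canObj} applied to the ordered object $(G^\rho f^\pi,\ \sym(\{1,\ldots,|V|\})\,\id)$ after pulling back along a representative $\pi_0\in\Delta\rho$; this returns a canonical labeling coset $\Theta\tau$ whose $\Delta$-orbits still have size $\le k$, hence runs in time $2^{\CO(k)}|V|^{\CO(1)}$. Defining $\Lambda := \Delta\rho\cdot(\text{the shift produced by }\Theta\tau)$ and checking it is $\norm_\Delta(Gf)\pi$ for $\pi\in\Lambda$ gives (CL2), while (CL1) follows because every construction step was equivariant: applying $\phi\in\iso(V;V')$ transports $(Gf,\Delta\rho)$ to $(\phi^{-1}Gf\phi,\phi^{-1}\Delta\rho)$, and the fixed group $G^\rho$, the coset decomposition, and the invocation of $\canObj$ all commute with this transport by the corresponding axioms.

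The step I expect to be the real obstacle is making precise the claim that canonizing the ``shift data'' $f^\pi$ relative to the fixed group $G^\rho$ can be phrased as a single call to $\canObj$ with an object whose largest $\Delta$-orbit has size $\le k$. One has to encode the coset $G^\rho f^\pi$ as a combinatorial object — presumably as the pair $(G^\rho f^\pi, (\Delta\rho)^\pi)$ or by splitting $f^\pi$ into its orbit-wise pieces and matching them up via the replacement lemma — in such a way that (i) the encoding is isomorphism-invariant, (ii) the orbit sizes that govern the running time of $\canObj$ are bounded by $k$ and not by $|V|$, and (iii) the output genuinely lands in $\lab(V)$ and satisfies the normalizer-coset identity $\aut((Gf,\Delta\rho))=\norm_\Delta(Gf)$. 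Verifying that $\norm_\Delta(G)=\Delta$ is preserved and that the recombination via $\langle\cdot\rangle$ (the smallest containing coset, as in the $\transitive$ macro) yields exactly $\norm_\Delta(Gf)\pi$ is the routine-but-delicate bookkeeping; everything else is a direct application of Theorem~\ref{theo:canObj} and Lemma~\ref{lem:rep}.
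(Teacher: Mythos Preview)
Your proposal has a real gap: you cannot reduce to Theorem~\ref{theo:canObj} in the way you describe. That theorem requires group cosets appearing in the object to be \emph{explicitly} listed (its running-time parameter $n$ is the input size ``when group cosets are explicitly given''), and $|Gf|=|G|$ can be as large as $|V|!$; so handing $Gf$ or $G^\rho f^\pi$ to $\canObj$ as an atom already blows the time bound. Your fallback of splitting into orbit-wise pieces does not work either: from $\Delta=\norm_\Delta(G)$ you only get that $\Delta$ permutes the $G$-orbits, not that the $\Delta$-orbits coincide with the $G$-orbits (take $G=\sym(V)$ and $\Delta=\{1\}$ for a drastic counterexample), so your parenthetical ``equivalently the $G$-orbits'' is false; and even when the $G$-orbits happen to be small, $G$ need not decompose as a direct product over them, so orbit-wise restrictions of $Gf$ lose global information and cannot simply be recombined by Lemma~\ref{lem:rep}. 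Your observation that $G^\pi=G^\rho$ is constant over $\pi\in\Delta\rho$ is correct and useful, but it does not by itself produce a small combinatorial object on which to run a canonization subroutine.

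The paper's argument avoids this by never asking any subroutine to canonize the coset $Gf$ as an object. Instead it encodes only the single permutation $f$ as a \emph{matching} $M_f=\{(\widetilde{f(v_i)},v_i)\}$ on a doubled domain $U=\tilde V\cupdot V$, and folds the group $G$ into the \emph{labeling coset}: it builds $\Delta_U\leq\sym(U)$ whose elements act on $V$ as some $\delta\in\Delta$ and on the copy $\tilde V$ as $\delta g$ for some $g\in G$. Closure of $\Delta_U$ under composition is precisely where the hypothesis $\Delta=\norm_\Delta(G)$ enters. One then calls $\canMatch(M_f,\Delta_U\rho_U)$ from Lemma~\ref{lem:canMatch}; its running time is governed by the $\Delta_U$-orbits on the $V$ side of $U$, which are exactly the $\Delta$-orbits and hence of size at most $k$. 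Restricting the output to $V$ gives the desired $\Lambda$. The essential idea you are missing is that $G$ should enter only as part of the symmetry group (where a generating set suffices), while the object actually being canonized---the matching $M_f$---has size $|V|$.
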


An isomorphism-version for groups with restricted composition factors
is stated as  Problem P7(1) in \cite{DBLP:conf/stoc/KantorL90}.
We use some of the ideas.
However, we make use of our framework to extend the algorithm to canonization
which later is required.
Additionally, we use the
adequate exponential recurrence that can handle unrestricted groups.

\begin{proof}
Define $\tilde V:=\{\tilde v_1,\ldots,\tilde v_{|V|}\}$ to
be a set of size $|V|$ that is disjoint from $V$.
The set $\tilde V$ can essentially be seen as a copy of the set $V$.
Define $U:=\tilde V\cupdot V$.
Define $\Delta_U\rho_U\leq\lab(U)$ to be the labeling coset
where $\Delta_U:=\{\delta_U\in\sym(U)\mid\exists
g\in G,\delta\in\Delta:\text{ for all }
i,j\in \{1,\ldots,|V|\} \text{ we have }
\delta_U(\tilde v_i)=\delta(g(v_i))$
and 
$\delta_U(v_j)=\delta(v_j)\}$
and where $\rho_U\in\lab(U)$
such that for all $i,j\in \{1,\ldots,|V|\}$ it holds
$\rho_U(\tilde v_i)=\rho(v_i)+|V|$ and $\rho_U(v_i)=\rho(v_i)$.
By the assumption that $\delta^{-1}G\delta=G$ for all $\delta\in\Delta$,
the set $\Delta_U$ defines indeed a group closed under composition.
Define a matching $M_f:=\{(\tilde{f(v_i)},v_i)\mid i\in \{1,\ldots,|V|\}\}$
by pairing the vertices according to $f$.
Compute $\Lambda_U:=\canMatch(M_f,\Delta_U\rho_U)$ using the algorithm from
\cref{lem:canMatch}.
We claim that the induced labeling coset $\Lambda:=\Lambda_U\downarrow_V$
defines
a canonical labeling for $(Gf,\Delta\rho)$
\emph{(the induced labeling coset
is defined at the beginning of \cref{sec:comb:objs:and:lab:cos})}.

\begin{cl1}
Assume we have $\phi^{-1}Gf\phi,\phi^{-1}\Delta\rho$
instead of $Gf,\Delta\rho$ as an input.
By the construction, we obtain $\phi_U^{-1}\Delta_U\rho_U$
instead of $\Delta_U\rho_U$ for some $\phi_U$ with $\phi_U|_V=\phi$.
Moreover, we obtain $M_f^{g_U\phi_U}$ instead of $M_f$
and for some $g_U$ with $g_U|_V=\id$
(the choices for $f\in Gf$ vary up to elements
in $G$).
By (CL1) of $\canMatch$ and since
$\phi_U^{-1}\Delta_U\rho_U=\phi_U^{-1}g_U^{-1}\Delta_U\rho_U$, we obtain
$\phi_U^{-1}g_U^{-1}\Lambda_U$ instead of $\Lambda_U$.
Therefore, we obtain
$(\phi_U^{-1}g_U^{-1}\Lambda_U)\downarrow_V=\phi^{-1}\Lambda$ instead of
$\Lambda$.
\end{cl1}

\begin{cl2}
In order to verify (CL2), we show that
$(\aut(M_f)\cap\Delta_U)|_V=\norm_\Delta(Gf)$.
The inclusion $\norm_\Delta(Gf)\subseteq(\aut(M)\cap\Delta_U)|_V$
already follows from Condition (CL1) of the problem.
It remains to show the reversed inclusion also holds, i.e.,
$(\aut(M_f)\cap\Delta_U)|_V\subseteq\norm_\Delta(Gf)$.
Let $\alpha:\tilde V\to V$ to be the bijection s
with $f(\tilde v_i)= v_i$.
So let $\delta_U\in\aut(M_f)\cap\Delta_U$.
Since $\delta_U\in\Delta_U$,
there are some $\delta\in\Delta,g\in G$ such that
$\alpha(\delta_U(\tilde v_i))=\delta(g(v_i))$
and $\delta_U(v_i)=\delta(v_i)$ for all $i\in \{1,\ldots,|V|\}$.
Since $\delta_U\in\aut(M_f)$, it holds that
$\alpha(\delta_U(\alpha^{-1} (v_i)))=f(\delta_U(f^{-1}(
v_i)))$ for all $i\in\{1,\ldots,|V|\}$.
Both together imply that $\delta(g(v_i))=\alpha(\delta_U(\tilde
v_i))=f(\delta_U(f^{-1}( v_i)))=f(\delta(f^{-1}(v_i)))$
for all $v_i\in V$.
Thus $\delta^{-1}f\delta=gf$ for some $g\in G$.
By assumption, it holds $\delta^{-1}G\delta=G$
and thus $\delta^{-1}Gf\delta=\delta^{-1}G\delta\delta^{-1}f\delta=Gf$.
This proves
$\delta_U|_{V}=\delta\in\norm_\Delta(Gf)$.
\end{cl2}

\begin{runtime}
All steps are polynomial-time computable, except the computation of
$\canMatch$.
The used algorithm runs in time simply exponential in $k_2$
where $k_2$ is
the size of the largest $\Delta_U$-orbit of $V\subseteq U$.
By the construction, $k_2$ is equal to
the size of the largest $\Delta$-orbit of $V$.
\end{runtime}
\end{proof}

Next, we will consider the general problem without that restriction to the
group and the labeling coset.

\problem{\canGroup\label{prob:CL:Group}}
{$(Gf,\Delta\rho)\in\obj(V)$ where $Gf\leq\sym(V)$,
$\Delta\rho\leq\lab(V)$ and $V$ is an unordered set}
{(Gf,\Delta\rho)}
{(\phi^{-1}Gf\phi,\phi^{-1}\Delta\rho)}
{\norm_\Delta(Gf)}

As usual, Conditions (CL1) and (CL2) given here coincide with the general
Condition (CL1) and (CL2).
For $\Delta=\sym(V)$ and $Gf=G$, the problem is equivalent to computing a
canonical labeling for a permutation group (up to permutational isomorphism).

\begin{theo}\label{theo:canGroup}
A function $\canGroup$
solving \cref{prob:CL:Group}
can be computed in time
$\binom{2|V|}{k}^{\CO(1)}\subseteq 2^{\CO(|V|)}$
where $k$ is the size of the largest
$\Delta$-orbit of $V$.
\end{theo}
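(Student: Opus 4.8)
The plan is to reduce \cref{prob:CL:Group} to the restricted problem \cref{prob:CL:shift}, by a recursion that repeatedly shrinks the labeling coset $\Delta\rho$ and, at the bottom, hands the remaining work to $\canShift$. The base case is $\Delta=\norm_\Delta(G)$ (in particular the case $\Delta=\{\id\}$, i.e.\ all $\Delta$-orbits are singletons): here the hypothesis of \cref{prob:CL:shift} is satisfied, so I would simply return $\canShift(Gf,\Delta\rho)$ from \cref{lem:can:shift}. Otherwise, after a routine reduction to the case where $\Delta$ acts transitively on a canonically determined region $A$ (say a largest $\Delta$-orbit, of size $k$, ties broken via ``$\prec$'' on suitable ordered data), I would pass to the ordered world via $\rho$ by setting $A^\ord:=A^\rho$ and $\Delta^\ord:=(\Delta\rho)^\rho$, split $A^\ord$ into its lower and upper halves $A^\ord_1\cupdot A^\ord_2$, compute $\Psi^\ord:=\stab_{\Delta^\ord}(A^\ord_1)$ by a membership test, and decompose $\Delta^\ord$ into its $s\leq\binom{k}{\lfloor k/2\rfloor}$ left cosets $\delta^\ord_i\Psi^\ord$. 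This splits $\Delta\rho$ into the labeling subcosets $\rho\delta^\ord_i\Psi^\ord$, whose group parts no longer act transitively on $A$ (their $A$-orbits have size at most $\lceil k/2\rceil$). I would recursively compute $\Delta_i\rho_i:=\canGroup(Gf,A,\rho\delta^\ord_i\Psi^\ord)$, reindex so that $(Gf,\Delta\rho)^{\rho_1}=\ldots=(Gf,\Delta\rho)^{\rho_r}\prec(Gf,\Delta\rho)^{\rho_{r+1}}\preceq\ldots$, and return $\Lambda:=\langle\Delta_1\rho_1,\ldots,\Delta_r\rho_r\rangle$, the smallest labeling coset containing their union (well defined as the intersection of all labeling cosets containing the union).

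For correctness I would first note two routine points: the ordered images $(Gf,\Delta\rho)^{\rho_i}$ are well defined and $\prec$-comparable by \cref{lem:prec}, since two representatives of $\Delta_i\rho_i$ differ by an element of its group part, which lies in $\Delta$ and hence is an automorphism of $(Gf,\Delta\rho)$; and each $\Delta_i\rho_i$ is a labeling subcoset of $\rho\delta^\ord_i\Psi^\ord\leq\Delta\rho$, so $\Lambda\leq\Delta\rho$. Then (CL1): the whole construction takes place in the ordered world $\{1,\ldots,|V|\}$ reached via $\rho$, and this world---the set $A^\ord$, the group $\Delta^\ord$, the subgroup $\Psi^\ord$, the coset decomposition, and the ordered images $(Gf,\Delta\rho)^{\rho_i}$---is unchanged when $(Gf,\Delta\rho)$ is replaced by its $\phi$-image for $\phi\in\iso(V;V')$; so by induction the recursive outputs become $\phi^{-1}\Delta_i\rho_i$, the surviving indices coincide, and $\Lambda$ becomes $\phi^{-1}\Lambda$, the smallest-containing-coset operation commuting with applying $\phi^{-1}$. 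And (CL2): put $N:=\norm_\Delta(Gf)=\aut((Gf,\Delta\rho))$. Every $a\in N$ lies in $\Delta$, hence permutes the subcosets $\rho\delta^\ord_i\Psi^\ord$; by (CL1) of the recursive calls it carries $\Delta_i\rho_i$ to $\Delta_j\rho_j$ along the induced permutation of indices, and since $a$ fixes $(Gf,\Delta\rho)$ it preserves the ordered images, so it permutes $\{1,\ldots,r\}$; hence $a$ stabilizes $\bigcup_{i\leq r}\Delta_i\rho_i$ setwise and $a\Lambda=\Lambda$, giving $N\leq\Theta$ for $\Theta$ the group part of $\Lambda$. Conversely, fixing $\rho_1\in\Delta_1\rho_1$, the group $\Theta$ is generated by the elements $\lambda\rho_1^{-1}$ with $\lambda\in\Delta_i\rho_i$, $i\leq r$; each lies in $\Delta$ and, using $(Gf,\Delta\rho)^\lambda=(Gf,\Delta\rho)^{\rho_1}$ (both being the $\prec$-minimal ordered image), satisfies $(Gf,\Delta\rho)^{\lambda\rho_1^{-1}}=(Gf,\Delta\rho)$, hence lies in $N$; so $\Theta\leq N$. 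Thus $\Theta=N$ and $\Lambda=\norm_\Delta(Gf)\rho_1$, which is (CL2).

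The hard part will be the running time, i.e.\ showing the total blow-up is $\binom{2|V|}{k}^{\CO(1)}$. Each step is polynomial apart from the $s\leq\binom{k}{\lfloor k/2\rfloor}\leq\binom{2|V|}{k}$ recursive calls and the terminal invocations of $\canShift$, which cost $2^{\CO(k)}|V|^{\CO(1)}$ by \cref{lem:can:shift}. The subtlety is that the $\binom{2|V|}{k}$ branching recurs over the $\CO(\log k)$ halving levels and, moreover, several $\Delta$-orbits of size near $k$ may still need to be broken up, so a naive accounting only gives $2^{\CO(k\log k)}$. To bring it down to $\binom{2|V|}{k}^{\CO(1)}$ the recursion must be organized so that these factors telescope---splitting a single largest orbit per step, bounding the coset indices $[\Delta^\ord:\Psi^\ord]$ against $\binom{2|V|}{k}$ uniformly, and terminating at $\canShift$ as soon as $\Delta$ normalizes $G$ rather than continuing to individualize down to singletons. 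Making this amortized count precise, and checking that the recursion really does reach the base case, is where I expect the real work to be.
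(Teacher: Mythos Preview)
Your correctness arguments for (CL1) and (CL2) are sound; the gap is exactly where you flag it, and it is fatal rather than a matter of tighter bookkeeping. Your recursion only ever shrinks $\Delta$ by passing to setwise-stabilizer subcosets, hoping to reach the base case $\Delta\leq\norm(G)$ before the branching blows up. But the leaf subcosets partition $\Delta\rho$, so the number of leaves is at least $|\Delta|/\max_{\text{leaves}}|\Theta|$, and every leaf group $\Theta$ must lie inside $\norm(G)$. Take $G$ to be the cyclic group generated by an $n$-cycle on $V$ with $n=|V|$ prime and $\Delta=\sym(V)$: then $|\norm(G)|=n(n-1)$, so your tree has at least $n!/(n(n-1))=(n-2)!$ leaves, far beyond $2^{\CO(n)}$. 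No reorganization of a pure $\Delta$-splitting scheme can fix this, because the total branching is lower-bounded by the index $[\Delta:\norm_\Delta(G)]$, which is not simply exponential in general. The ``telescoping'' you hope for simply cannot happen in this scheme.

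The paper's proof avoids this by never trying to make $\Delta$ normalize $G$ directly. It carries an auxiliary $\Delta$- and $G$-invariant set $A\subseteq V$ and maintains the invariant $\Delta=\norm_\Delta(G_A)$ for the \emph{pointwise stabilizer} $G_A:=\pstab_G(A)$; this holds trivially at the start with $A=V$, since $G_V=1$. Progress is measured by shrinking $A$, not $\Delta$. The transitive case (on $A$) is your halving split. The essential new idea is the intransitive case: with $A=A_1\cupdot A_2$, one climbs the chain $G_A\leq G_{A_1}\leq G_{\{A_1\}}\leq G$ via two recursive calls (first with $(G_{A_1},A_2)$, then with $(G_{\{A_1\}},A_1)$), and then decomposes $G$---not $\Delta$---into the at most $\binom{|V|}{|A_1|}$ right cosets of $G_{\{A_1\}}=\stab_G(A_1)$, shifts the canonical labeling of $G_{\{A_1\}}$ to each coset via $\canShift$, and recombines them using the object-replacement lemma and \cref{theo:canObj}. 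That single bounded-index decomposition of $G$ is the mechanism you are missing; it is what makes the running-time recurrence close at $\binom{2|V|}{k}^{\CO(1)}$.
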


\paragraph{Intuition for the Permutation Group Algorithm}
To solve \cref{prob:CL:Group}, we will maintain at any point in time
a set $A\subseteq V$ which is $\Delta$-invariant and $G$-invariant
and for which we require Condition (A):
$\Delta=\norm_{\Delta}(G_A)$ for $G_A:=\pstab_G(A)$.
Intuitively, this means 
that the labeling coset $\Delta\rho$ consists of canonical labelings for
the subgroup $G_A$ which is obtained from $G$ by a pointwise fixation of the set
$A$.
This set $A$ measures our progress in the sense
the index of $G_A$ in $G$ is bounded by $|A|!$
and thus if the set $A$ is small, then we have already canonized a
relatively large subgroup $G_A$ of $G$.

In the transitive case in which the set $A$ is a $\Delta$-orbit,
we decompose the labeling coset $\Delta\rho$ into (intransitive) labeling
subcosets.
Each labeling subcoset can be seen as an individualization of the permutation
domain and can be handled recursively.
Each recursive call leads to the case of intransitive labeling cosets which we
explain next.

In the intransitive case where $\Delta\leq\stab(A_1,A_2)$ and where
$G\leq\stab(A)$, we define a subgroup chain $G_A=G_{A_1\cup A_2}\leq G_{A_1}\leq
G_{\{A_1\}}\leq G$.
The subgroups are defined as $G_{A_1}:=\pstab_G(A_1)$ and $G_{\{A_1\}}:=\stab_G(A_1)$.
In a bottom up fashion, we will compute canonical labelings for all these
groups.
By Condition (A), a canonical labeling of the group $G_A=G_{A_1\cup A_2}$ is
already available.
Since $G_{A_1\cup A_2}$ is a subgroup of $G_{A_1}$ obtained by a
pointwise fixation of the set $A_2$, a canonical labeling for $G_{A_1}$ can be obtained
recursively using the same algorithm with $G_{A_1}$ in the role of $G$ and $A_2$
in the role of $A$.
After this recursive call, the canonical labeling for $G_{A_1}$ is available.
Similar, the group $G_{A_1}$ is a subgroup of $G_{\{A_1\}}$ obtained by
pointwise fixation of the set $A_1$, so also a canonical labeling for
$G_{\{A_1\}}$ can be obtained recursively.
To compute a canonical labeling for $G$, we use a different strategy that
exploits that the index of $G_{\{A_1\}}$ in $G$ is simply-exponentially bounded.
We decompose $G=\bigcup_{g\in G} G_{\{A_1\}}g$ into right cosets of $G_{\{A_1\}}$.
Using the previous shifting algorithm, we can shift the canonical
labeling for $G_{\{A_1\}}$ (which is already computed) to all the cosets
$G_{\{A_1\}}g$.
So far, the algorithm computed canonical labelings for all the cosets $G_{\{A_1\}}g$.
Finally, we make use of an object replacement paradigm together with the main
algorithm of our recent framework to combine the collection of the canonical labelings.
By doing so, this results in a canonical labeling for the entire group
$G$.

\paragraph{Detailed Description of the Permutation Group Algorithm} 
Proving \cref{theo:canGroup}, we give a detailed description and analysis
of the algorithm for permutation groups and cosets.

\begin{proof}[Proof of \cref{theo:canGroup}]
For the purpose of recursion, we need an additional input parameter.
Specifically, we use a subset $A\subseteq V$ such that $G,\Delta\leq\stab(A)$
and such that
\begin{enumerate}
  \item[\textnormal{(A)}] $\Delta=\norm_{\Delta}(G_A)$ for $G_A:=\pstab_G(A)$.
\end{enumerate}
Initially, we set $A:=V$.

\algorithmDAN{\canGroup(Gf,A,\Delta\rho)}
\begin{cs}

\case{$Gf\neq G$}~\\
Compute $\Lambda_1:=\canGroup(G,A,\Delta\rho)$ recursively.\\
Compute and return $\Lambda:=\canShift(Gf,\Lambda_1)$ using the algorithm from
\cref{lem:can:shift}.

\end{cs}
\emph{(Now, we achieved that $Gf=G\leq\sym(V)$ is a group, rather than a
proper coset.)}
\begin{cs}
\case{$|A|\leq 1$}~\\
Return $\Lambda:=\Delta\rho$.

\case{$\Delta$ is intransitive on $A$}~\\
Partition $A=A_1\cupdot A_2$ where
$A_1\subseteq A$ is a minimal size
$\Delta$-orbit such that $A_1^\rho$ is minimal.\\
\minimal\\
Define $G_{A_1}:=\pstab_G(A_1)$ and define $G_{\{A_1\}}:=\stab_G(A_1)$.\\
\emph{(To compute the group $G_{A_1}$, we use the
Schreier-Sims algorithm and to compute the group $G_{\{A_1\}}$, we use
a membership test. The running times are specified in the preliminaries.)}\\
Compute $\Lambda_1:=\canGroup(G_{A_1},A_2,\Delta\rho)$ recursively.\\
Compute $\Lambda_2:=\canGroup(G_{\{A_1\}},A_1,\Lambda_1)$ recursively.\\
Define $\CX:=\{G_{\{A_1\}}g\mid g\in G\}$.\\
Compute $\Delta_{X}\rho_{X}:=\canShift(X,\Lambda_2)$ for all $X=G_{\{A_1\}}g\in \CX$
using the algorithm from \cref{lem:can:shift}.\\
Define $\CX^\set:=\{\Delta_{X}\rho_{X}\mid X=G_{\{A_1\}}g\in \CX\}$.\\
Define an \emph{ordered} partition
$\CX^\set=\CX_1^\set\cupdot\ldots\cupdot \CX_s^\set$ 
such that\\
$X^{\rho_X}
\prec Y^{\rho_Y}$, if and only if
$\Delta_{X}\rho_{X}\in \CX^\set_p$
and $\Delta_{Y}\rho_{Y}\in \CX^\set_q$ for some~$p,q\in[s]$ with~$ p<q$.\\
Return
$\Lambda:=\canObj((\CX_1,\ldots,\CX_s),\Delta\rho)$ using the algorithm from
\cref{theo:canObj}.

\case{$\Delta$ is transitive on $A$}~\\
\transitive{\canGroup}{G}

\end{cs}

\begin{A}
In the intransitive case, we need to show that Condition (A) remains
satisfied for the recursive calls.
Consider the first recursive call.
By definition, Condition (A) remains satisfied for the recursive instance
$(G_{A_1},A_2,\Delta\rho)$ also, i.e., $\Delta=\norm_\Delta(G_A)
=\norm_\Delta((G_{A_1})_{A_2})$.
For the second recursive call observe that
$\Lambda_1=\Delta_1\rho_1$ is a canonical labeling for
$(G_{A_1},A_2,\Delta\rho)$ and thus by (CL2) of this problem, it holds
$\Delta_1=\norm_{\Delta_1}(G_{A_1})=\norm_{\Delta_1}((G_{\{A_1\}})_{A_1})$.
\end{A}

\begin{cl1}
Assume that we have $\phi^{-1}Gf\phi,A^\phi,\phi^{-1}\Delta\rho$
instead of $M,A,\Delta\rho$ as an input.
For these parameters $\phi^{-1}\rho$ is a coset representative
for $\phi^{-1}\Delta\rho$.
We need to show that the algorithm outputs $\phi^{-1}\Lambda$
instead of $\Lambda$.

In the case $Gf\neq G$, we obtain $\phi^{-1}\Lambda_1$
instead of $\Lambda_1$ by induction.
By (CL1) of $\canShift$, we return $\phi^{-1}\Lambda$
instead of $\Lambda$.

In the base case $|A|\leq 1$, we
return $\phi^{-1}\Delta\rho$ instead of $\Delta\rho$.

In the intransitive case, we obtain
the same partition $A^\phi=A_1^\phi\cupdot A_2^\phi$
since $A_1^\rho=A_1^{\phi\phi^{-1}\rho}$.
We obtain $(\phi^{-1}G\phi)_{A_1^\phi}=\phi^{-1}G_{A_1}\phi$
and $(\phi^{-1}G\phi)_{\{A_1^\phi\}}=\phi^{-1}G_{\{A_1\}}\phi$
instead of $G_{A_1}$ and $G_{\{A_1\}}$, respectively.
By induction, we obtain $\phi^{-1}\Lambda_1$
and $\phi^{-1}\Lambda_2$ instead of $\Lambda_1$
and $\Lambda_2$, respectively.
We obtain $\CX^\phi$ instead of $\CX$
and by (CL1) of $\canShift$, we return $\phi^{-1}\Delta_X\rho_X$
instead of $\Delta_X\rho_X$.
By (CL1) of $\canObj$, we finally return $\phi^{-1}\Lambda$
instead of $\Lambda$.

The transitive case is similar to the analysis used to proof
\cref{lem:canMatch}, however for completeness, we will recall it.
The ordered objects $A^\ord$ and $\Delta^\ord$
remain unchanged since
$A^{\phi\phi^{-1}\rho}=A^{\rho}$
and $(\phi^{-1}\Delta\rho)^{\phi^{-1}\rho}=(\Delta\rho)^{\rho}$.
Also the partition $A^\ord=A^\ord_1\cupdot A^\ord_2$
and the ordered group $\Psi^\ord$ remains unchanged.
We obtain cosets of the form $\phi^{-1}\rho\delta_i^\ord\Psi^\ord$
instead of $\rho\delta^\ord_j\Psi^\ord$ since
the indexing is arbitrary.
The calls are
of the form
$\canGroup(\phi^{-1}G\phi,A^\phi,\phi^{-1}\delta^\ord_i\Psi^\ord)$
instead of $\canGroup(G,A,\delta_j^\ord\Psi^\ord)$.
By induction, we obtain $\phi^{-1}\Delta_i\rho_i$
instead of $\Delta_j\rho_j$.
Therefore, we obtain $\phi^{-1}\rho_i$
instead of $\rho_j$.
However, the ordered sequence remains unchanged since $(G,\Delta\rho)^{\rho_i}
=(\phi^{-1}G\phi,\phi^{-1}\Delta\rho)^{\phi^{-1}\rho_i}$.
The computation of $\Lambda$ is known to be isomorphism invariant
and therefore the algorithm returns $\phi^{-1}\Lambda$
instead of $\Lambda$.
\end{cl1}

\begin{cl2}
In the base case where $|A|\leq 1$, we have that $G_A=\stab_G(A)=G$.
Combined with Condition (A), it follows that
$\Delta=\norm_\Delta(G_A)=\norm_\Delta(G)$.

In the intransitive case, it holds that $\Lambda$
defines a canonical labeling for $(\CX_1^\set,\ldots,\CX_s^\set,\Delta\rho)$.
By object replacement (\cref{lem:rep}),
$\Lambda$ defines
a canonical labeling for $(\CX_1,\ldots,\CX_s,\Delta\rho)$
where $\CX=\CX_1\cupdot\ldots\cupdot \CX_s$
such that
$X\in \CX_i$, if and only if $\Delta_X\rho_X\in \CX^\set_i$.
Since $(\CX_1,\ldots,\CX_r)$ is an \emph{ordered}
partition of $\CX$ defined in an isomorphism-invariant way, it holds that
$\Lambda$ defines a canonical labeling for $(\CX,\Delta\rho)$.
Again, $\CX=\{G_{\{A_1\}}g\mid g\in G\}$ is an isomorphism-invariant
(unordered) partition of the group $G$
and therefore $\Lambda$ defines a canonical labeling for $(G,\Delta\rho)$.

Consider the transitive case and
observe that Condition (CL1)
of the problem $\canGroup$ already
implies that
$\norm_\Delta(G)\pi\subseteq\Lambda$
for some $\pi\in\Lambda$.
It remains to show that the reversed inclusion also holds, i.e.,
$\Lambda\subseteq\norm_\Delta(G)\pi$.
Equivalently, we need to show  $\rho_i\rho_j^{-1}\in\norm_\Delta(G)$ for all
$i,j\in[r]$.
The membership $\rho_i\rho_j^{-1}\in\norm(G)$
follows from the equation $G^{\rho_i}=G^{\rho_j}$
and the membership $\rho_i\rho_j^{-1}\in\Delta$ follows similarly
from the equation $(\Delta\rho)^{\rho_i}=(\Delta\rho)^{\rho_j}$.
\end{cl2}

\begin{runtime}
First, we consider the number of recursive calls of this algorithm.
In Case $Gf\neq G$, there is one single recursive call
that leading to a different case,
so this case can be neglected in our analysis.
It remains to consider the recursive
calls of the other cases.
Let $A^*\subseteq A$ be a $\Delta$-orbit that is of maximal size.
We claim that the maximum number of recursive calls $R(|A^*|,|A|)$ is bounded
by $T:=2^{6|A^*|}|A|^2$.
In the intransitive case, this can be seen by induction:
\begin{equation*}
R(|A^*|,|A|)\leq 1+\sum_{j\in[2]} R(|A^*|,|A_j|)
\overset{\text{induction}}{\leq}
1+2^{6|A^*|}(|A_1|^2+|A_2|^2)\leq T.
\end{equation*}
In the transitive case, it holds that $A^*=A$
and $s\leq 2^{|A|}$ and which leads to
\begin{equation*}
R(|A|,|A|)\leq 1+ s\cdot R(\lceil|A|/2\rceil,|A|)
\overset{\text{induction}}{\leq}
1+2^{4|A|+3}|A|^2
\overset{2\leq |A|}{\leq}
T.
\end{equation*}

Next, we give a bound on the running time that is needed for one single call
without recursive costs.
Let $k$ be the size of the
largest $\Delta$-orbit for the initial instance.
In Case $Gf\neq G$, we use the algorithm from \cref{lem:can:shift}
which runs in time $2^{\CO(k)}|V|^{\CO(1)}$.
In the intransitive case, we need to compute a group $G_{\{A_1\}}$.
The index of $G_{\{A_1\}}$ in $G$ is bounded by
the $G$-orbit of $A_1$
which in turn is at most
$b:=\binom{|V|}{|A_1|}\leq\binom{2|V|}{k}$.
The group $G_{\{A_1\}}$ can be computed
in time polynomial in the index and $|V|$,
i.e., $(b|V|)^{\CO(1)}$.
We have $b$ calls to the algorithm from \cref{lem:can:shift}
that
run in time $2^{\CO(k)}|V|^{\CO(1)}$ per instance.
Similar, the representation size of $\CX^\set$ is bounded by $(b|V|)^{\CO(1)}$
and therefore $\Lambda$ can be computed in time 
$2^{\CO(k)}(b|V|)^{\CO(1)}$.
In the transitive case, the group $\Psi^\NN$ is computed
in time polynomial in the index and $|V|$,
i.e., ${2^{\CO(k)}|V|^{\CO(1)}}$.

In total, we have a running time of at most
$T\cdot 2^{\CO(k)}(b|V|)^{\CO(1)}
\subseteq \binom{2|V|}{k}^{\CO(1)}$.
\end{runtime}
\end{proof}

\begin{cor}\label{cor:canGroup}
Canonical labelings for permutation groups and cosets (up to permutational
isomorphism) can be computed in time $\binom{2|V|}{k}^{\CO(1)}\subseteq 2^{\CO(|V|)}$
where $V$ is the permutation domain and
$k$ is the size of the largest color class of $V$.
\end{cor}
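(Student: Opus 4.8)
The plan is to derive \cref{cor:canGroup} directly from \cref{theo:canGroup} by a straightforward reformulation. The corollary is asking for a canonical labeling of a permutation group (or coset) $G f \leq \sym(V)$ where the permutation domain $V$ carries a coloring — that is, a partition of $V$ into color classes — and we wish to canonize only up to color-preserving permutational isomorphism. The natural way to encode the coloring is to set $\Delta\rho := \lab(V)$ restricted to color-preserving labelings; more precisely, if the color classes are $C_1, \ldots, C_m$, let $\Delta := \stab(C_1) \cap \cdots \cap \stab(C_m) \leq \sym(V)$ be the color-preserving permutations, and let $\rho$ be any bijection $V \to \{1, \ldots, |V|\}$ that maps each color class to a contiguous block of integers. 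Then $\Delta\rho \leq \lab(V)$ is a labeling coset whose $\Delta$-orbits are exactly the color classes, so the parameter $k$ — the size of the largest $\Delta$-orbit — equals the size of the largest color class, matching the statement of the corollary.

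The key observation connecting the two statements is that $\aut((Gf, \Delta\rho)) = \norm_\Delta(Gf)$, which is spelled out explicitly in the paragraph on isomorphisms and automorphisms of unordered objects in \cref{sec:comb:objs:and:lab:cos}. Thus applying the function $\canGroup$ of \cref{theo:canGroup} to the input object $(Gf, \Delta\rho)$ produces a labeling coset $\Lambda \leq \lab(V)$ satisfying (CL1) $\Lambda = \phi \cdot \canGroup(\phi^{-1} Gf\phi, \phi^{-1}\Delta\rho)$ for all $\phi \in \iso(V;V')$ and (CL2) $\Lambda = \norm_\Delta(Gf)\pi$ for some (all) $\pi \in \Lambda$. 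Unwinding what this means: (CL1) says that two colored groups related by a color-preserving bijection receive equivalent canonical labelings, which is precisely the isomorphism-invariance one wants for a canonical labeling of colored permutation groups; and (CL2) says the canonical labeling coset is a coset of the color-preserving normalizer of $Gf$, which is the correct automorphism group for the colored object. So $\Lambda$ is by definition a canonical labeling for the colored group $Gf$ in the sense demanded.

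The remaining point is the running time. By \cref{theo:canGroup}, computing $\canGroup(Gf, V, \Delta\rho)$ — with the initial recursion parameter $A := V$, which satisfies Condition (A) trivially since $\Delta = \norm_\Delta(\pstab_G(V)) = \norm_\Delta(\{\id\}\text{-ish})$... wait, one must double-check this. Actually the initial setting $A := V$ forces $G_A = \pstab_G(V) = \{\id\}$, and Condition (A) requires $\Delta = \norm_\Delta(G_A) = \norm_\Delta(\{\id\}) = \Delta$, which holds trivially. So the initial instance is legitimate. The time bound from \cref{theo:canGroup} is $\binom{2|V|}{k}^{\CO(1)} \subseteq 2^{\CO(|V|)}$ with $k$ the largest $\Delta$-orbit, and since we arranged $k$ to equal the largest color class, this is exactly the claimed bound.

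I do not anticipate any real obstacle here — the corollary is essentially a restatement of \cref{theo:canGroup} with the abstract labeling-coset parameter $\Delta\rho$ specialized to encode a vertex coloring. The only thing requiring a moment of care is verifying that the $\Delta$-orbits of the chosen $\Delta\rho$ are exactly the color classes (they are, since $\Delta$ is generated by, and contained in, the product of the symmetric groups on the individual color classes) and that Condition (A) holds at the top-level call (it does, vacuously). Both are routine. Thus the proof is a short paragraph: introduce the color-encoding labeling coset, invoke \cref{theo:canGroup}, and read off the conclusion.
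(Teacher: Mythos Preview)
Your proposal is correct and takes essentially the same approach as the paper: encode the coloring $(C_1,\ldots,C_t)$ by the labeling coset $\Delta\rho$ consisting of all labelings that respect the color classes (the paper writes this as $\Delta\rho=\{\lambda\in\lab(V)\mid \lambda(v_i)<\lambda(v_j)\text{ whenever }v_i\in C_i,v_j\in C_j,i<j\}$, which is exactly your $\Delta=\bigcap_i\stab(C_i)$ together with a block-respecting $\rho$), and then invoke \cref{theo:canGroup}. Your additional checks---that the $\Delta$-orbits coincide with the color classes and that Condition~(A) holds trivially for the initial $A=V$---are correct and slightly more explicit than what the paper writes down.
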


\begin{proof}
We need to a compute canonical labelings
for $(Gf,C_1,\ldots,C_t)$
where $Gf\leq\sym(V)$
is a group coset over the domain
$V=C_1\cupdot\ldots\cupdot C_t$.
This is done by calling the previous algorithm with input $(Gf,\Delta\rho)$
where $\Delta\rho=\{\lambda\in\lab(V)\mid\forall i,j\in[t],i<j \forall v_i\in
C_i,v_j\in C_j:\lambda(v_i)<\lambda(v_j)\}$.
\end{proof}

\section{Outlook and Open Questions}

We showed that canonization of permutation groups on a permutation domain $V$
can be done in $2^{\CO(|V|)}$ regardless of the order $|G|$ of the group.
Our result generalizes the known time bound of
$2^{\CO(|V|)}|G|^{\CO(1)}$.

However, in the setting of bounded color
class size $k$, we did not achieved a generalization.
In the this work, we presented an algorithm
for implicitly given permutation groups
running in time
$|V|^{\CO(k)}$.
However, for explicitly given groups $G$, an algorithm is known that
runs in time $2^{\CO(k)}|V|^{\CO(1)}|G|^{\CO(1)}$
\cite{DBLP:journals/corr/abs-1806-07466}.
Since the running times are orthogonal to each other,
we ask for an unifying running time for canonization of permutation groups in $2^{\CO(k)}|V|^{\CO(1)}$.

With our result for permutational isomorphisms
most of the studied isomorphism problems have simply-exponential time
bounds now.
However, the isomorphism problem for implicitly given group codes (also known as
group code equivalence) does not have such a time bound.
While the problem for linear codes reduces to permutational
isomorphism in polynomial time, the situation for cyclic groups seems difficult.
It is an open problem if group code equivalence for cyclic groups $G$ of prime
power order (such as $\ZZ/p^2\ZZ$) and codes of length $|V|$ can be decided in
$2^{\CO(|V|)}\log(|G|)^{\CO(1)}$.

\bibliographystyle{alpha}
\bibliography{references}

\addcontentsline{toc}{section}{Bibliography}

\end{document}